\definecolor{cGreen}{RGB}{0,150,0}
\definecolor{brown}{RGB}{139,64,0}
\newcommand{\ournameLow}{{uniformly co-clustered intent modeling}\xspace}
\newcommand{\ournameBold}{{\textbf{U}niformly \textbf{C}o-\textbf{C}lustered \textbf{I}ntent {M}odeling}\xspace}
\newcommand{\ournameAbbr}{{UC$^2$I}\xspace}
\newcommand{\coreModuleName}{{Uniformly Co-Clustered Intent Modeling}\xspace}
\newcommand{\coreModuleNameLow}{{uniformly co-clustered intent modeling}\xspace}
\DeclareMathOperator{\sign}{sign}
\begin{document}

\newcommand\relatedversion{}
\renewcommand\relatedversion{\thanks{The full version of the paper can be accessed at \protect\url{https://arxiv.org/abs/1902.09310}}} 

\title{\Large Enhancing Graph Collaborative Filtering via Uniformly \\Co-Clustered Intent Modeling}

\author{Jiahao Wu\thanks{The Hong Kong Polytechnic University.} \thanks{Southern University of Science and Technology.}
\and Wenqi Fan\footnotemark[1]
\and Shengcai Liu\thanks{A*STAR, Singapore.}
\and Qijiong Liu\footnotemark[1]
\and Qing Li \footnotemark[1]
\and Ke Tang  \footnotemark[2]
}

\date{}

\maketitle


\fancyfoot[R]{\scriptsize{Copyright \textcopyright\ 2024 by SIAM\\
Unauthorized reproduction of this article is prohibited}}





\begin{abstract} Graph-based collaborative filtering has emerged as a powerful paradigm for delivering personalized recommendations. Despite their demonstrated effectiveness, these methods often neglect the underlying intents of users, which constitute a pivotal facet of comprehensive user interests. Consequently, a series of approaches have arisen to tackle this limitation by introducing independent intent representations. However, these approaches fail to capture the intricate relationships between intents of different users and the compatibility between user intents and item properties.

To remedy the above issues, we propose a novel method, named \ournameLow. 
Specifically, we devise a uniformly contrastive intent modeling module to bring together the embeddings of users with similar intents and items with similar properties. This module aims to model the nuanced relations between intents of different users and properties of different items, especially those unreachable to each other on the user-item graph. To model the compatibility between user intents and item properties, we design the user-item co-clustering module, maximizing the mutual information of co-clusters of users and items. This approach is substantiated through theoretical validation, establishing its efficacy in modeling compatibility to enhance the mutual information between user and item representations. Comprehensive experiments on various real-world datasets verify the effectiveness of the proposed framework.
\end{abstract}

\section{Introduction}

Recommender systems have become an indispensable part of our daily lives, providing personalized information filtering and aiding users in making informed decisions. The key of recommendation is to infer the preference of users towards items based on their historical interactions~\cite{he-et-al:lightgcn,wu-et-al:DcRec_cikm22,fan2023jointly}. Collaborative filtering (CF) is the most commonly used technique to model user-item interactions, grounded in the assumption that users behaving similarly tend to have similar preferences~\cite{fan2022graphTrend,fan2019graphRec,wenqi23LLMRec}. Owing to the power of modeling graph data, graph neural networks (GNNs) are recently applied to enhance CF to learn informative representations for recommendation~\cite{he-et-al:lightgcn,wu-et-al:DcRec_cikm22,fan2019graphRec,msu23LLMGraph}, referred as graph collaborative filtering.

\begin{figure}[]
\centering
{\includegraphics[width=1.0\linewidth]{{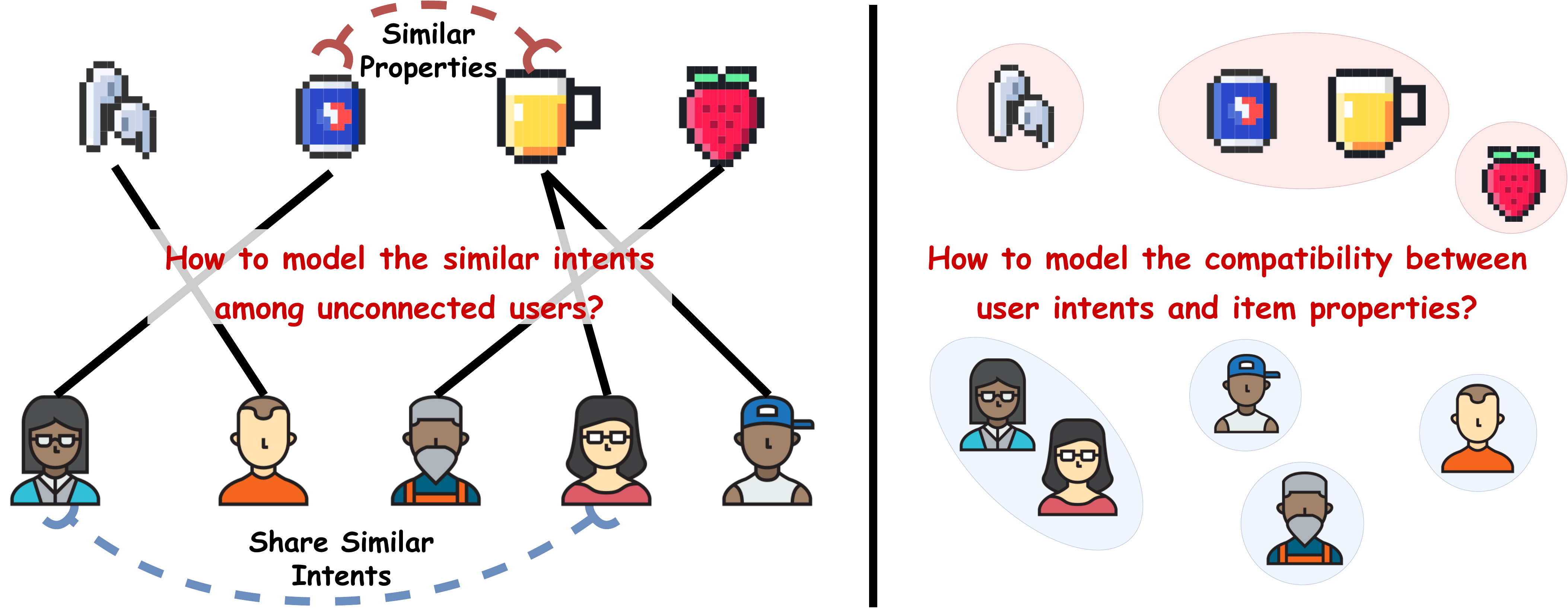}}}
\vskip -0.05in
\caption{Critical challenges in modeling user intents.}
\vskip -0.08in
\label{fig:intro}
\end{figure}

Despite the success of GNNs on modeling user preferences, prior manner of modeling user-item relationships is insufficient to capture the users' intent. In this paper, we denote a user intent as one's interests on a group of items sharing similar properties, while preference represents user's evaluation on a specific item~\cite{arXiv23intent,umap18intent}. For instance, Scarlett may choose to order beef instead of pork for dinner during fitness, although she actually prefers pork. Normally, recommending pork to Scarlett is reasonable since she may order it in the future.
Recently, some researches have been proposed to model the intents of users~\cite{nips19DisRec,umap18intent,arXiv23intent,wsdm22intent}. One of the representative methods~\cite{nips19DisRec} constructs independent representation for each intent in implicit feedback. However, several critical challenges in modeling intent still remain unsolved.

In Figure~\ref{fig:intro}, we observe that some users, inaccessible within traditional graph-based collaborative filtering, exhibit similar latent intents. However, existing methods fail to capture these affinities, leading to the first challenge: \textit{how to model the relationships between users' latent intents, especially for those not reachable in the user-item graph?} Furthermore, given that items may share similar properties, recommending items aligned with a user's past interactions or users sharing similar intents can be advantageous. Hence, we propose that explicitly encoding compatibility information in representation learning can enhance recommendation performance—a second challenge: \textit{how to model the compatibility between users' intents and item properties?}

To alleviate these two challenges, we propose a novel method, named \textbf{U}niformly \textbf{C}o-\textbf{C}lustered \textbf{I}ntent {M}odeling (\textbf{UC$^2$I}). UC$^2$I addresses the challenges via two modules: \textit{uniformly contrastive intent modeling} and \textit{user-item co-clustering}. 

To model the similarity of different user intents and different item properties, we devise a \textit{uniformly contrastive intent modeling} module, endeavoring to push the embeddings of users/items with similar intents/properties close. Specifically, we first apply clustering algorithm to the embeddings to infer intents/properties, which are the prototypes of each cluster\footnote{Therefore, in the rest of this paper, we may refer users/items by nodes or instances and refer intents/properties by prototypes.}. Then, to mitigate the negative influence of KOL (Key Opinion Leader) and popular items on the representation learning and clustering~\cite{popularityBias19FLAIRS,Wu-et-al:he2021sgl}, we pre-compute the optimal locations of prototypes to guide the learning process~\cite{Florian-et-al:dissect-cl}, as illustrated in Figure~\ref{fig:entire_frame}. Throughout this paper, we denote the optimal locations by targets. Prior to the learning process, we assign those targets to each instance based on the similarity between targets and their prototypes. 

Further, to capture the compatibility between user intents and item properties, we propose an \textit{user-item co-clustering} module. Concretely, we maximize the mutual information of user-item co-clusters. We theoretically prove that maximizing mutual information of co-clusters is capable of improving the mutual information between representations of users and items. In addition, to avoid the instances collapsing towards prototypes, we incorporate an instance-level contrastive learning module~\cite{lin-et-al:NCL,yu-et-al:XSimGCL}, enabling instance discrimination. Empirical evaluations on four real-world datasets underscore the effectiveness of our method, and the efficacy of each module is rigorously examined.

The major contributions of this paper could be summarized as follows: 
(1) We investigate a new perspective of improving graph collaborative filtering via intent modeling. Specifically, we aim at capturing the intent relations among various users and modeling their compatibility with items' properties.
(2) We propose \ournameBold (\ournameAbbr), which includes a uniformly contrastive intent modeling mechanism and a user-item co-clustering module for modeling intent-item compatibility, with theoretical validation.
(3) We conduct comprehensive experiments on several real-world datasets to validate the effectiveness of our proposed framework. Furthermore, ablation study is devised to investigate the effectiveness of the key modules.  
\section{Preliminaries}
Collaborative filtering (CF) is a fundamental technique in modeling user-item interactions. Graph collaborative filtering methods adopt graph neural networks (GNNs) as the CF backbone to obtain the representations of instances. 
Let $\mathcal{U}$ and $\mathcal{I}$ be the set of users and items respectively, while $\mathcal{O}^+ = \{y_{ui}|u\in \mathcal{U}, i\in \mathcal{I}\}$ denotes the observed interactions. In the user-item bipartite graph $\mathcal{G} = \{\mathcal{V}, \mathcal{E}\}$, the node set involves all users and items $\mathcal{V} = \mathcal{U}\cup\mathcal{I}$. The edge set is based on the observed interactions $\mathcal{E}=\mathcal{O}^+$. Typically, the representation learning of graph collaborative filtering~\cite{he-et-al:lightgcn,fan2019graphRec} consists of two steps: neighborhood aggregation and readout for the final representations, which could be formally summarized as follows:
\begin{equation}
    \begin{aligned}
        \bm{z}_{u}^{(l)}=&\;f_{combine}(\bm{z}_{u}^{(l-1)}, \;f_{aggregate}(\{\bm{z}_{v}^{(l-1)}|v\in \mathcal{N}_u\})),\\
        &\bm{z}_{u}=\;f_{readout}([\bm{z}_{u}^{(0)},\bm{z}_{u}^{(1)},...,\bm{z}_{u}^{(L)}]),
    \end{aligned}
\end{equation}
where $\mathcal{N}_u$ is the set of node $u$'s neighbors and $L$ denotes the number of GCNs layers. There are a variety of designs for the function $f_{combine}$ and $f_{aggregate}$~\cite{nips2017aggregate_2,wang2019ngcf,zgh23LLMRobust}, as well as the readout function $f_{readout}$~\cite{he-et-al:lightgcn,kdd2018readout_1,jiatong23LLMDrug}.
\section{Method}
\label{sec:framework}
In this section, we present the proposed method. The method mainly consists of three parts: (1) {Collaborative Filtering (CF) Backbone}, which aims to encode the interaction information between users and items for recommendation task; (2) {\coreModuleName} is devised to model the intents of users and compatibility between intents and items' properties, as shown in Figure~\ref{fig:entire_frame}; (3) {Instance-Level Contrastive Learning} is devised to model the instance-level information and prevent collapse of representations. 

Next, we will elaborate on each part of the proposed method and provide a theoretical justification of the user-item co-clustering module.

\subsection{Collaborative Filtering Backbone}
In this paper, we adopt LightGCN~\cite{he-et-al:lightgcn} to model the interactions between users and items. To empower the efficiency of instance-level contrastive learning, we add noises in the propagation process of each layer:
\begin{equation} 
\begin{aligned}
\bm{z}_{u}^{(l+1)} &=\sum_{i \in \mathcal{N}_{u}}( \frac{1}{\sqrt{\left|\mathcal{N}_{u}\right|\left|\mathcal{N}_{i}\right|}} \bm{z}_{i}^{(l)} + \Delta^{(l)}_i), \\
\bm{z}_{i}^{(l+1)} &=\sum_{u \in \mathcal{N}_{i}} (\frac{1}{\sqrt{\left|\mathcal{N}_{i}\right|\left|\mathcal{N}_{u}\right|}} \bm{z}^{(l)}_u + \Delta^{(l)}_u),
\end{aligned}
\label{eq:graph_prop}
\end{equation}
where $\mathcal{N}_u$ and $\mathcal{N}_i$ denote the neighbor set of user $u$ and item $i$ in the interaction graph $\mathcal{G}$. $\Delta$ is the added noise, subject to:
\begin{equation}
\Delta_{m}^{(n)}=X \odot\sign(\bm{z}_{m}^{(n)}),\;\;X \in \mathbb{R}^{d}\sim U(0,1),
\label{eq:delta}
\end{equation}

After propagating, we adopt the weighted sum function as the readout function to obtain the final representations for users and items as follows:
\begin{equation}
{\bm{z}}_{u}= \frac{1}{L+1} \sum_{l=0}^{L} {\bm{z}}_{u}^{(k)},
\;\;\;
{\bm{z}}_{i}= \frac{1}{L+1} \sum_{l=0}^{L} {\bm{z}}_{i}^{(k)},
\label{eq:layer_sum}
\end{equation}
where $L$ denotes the number of lightGCN's layers. To predict the preference of user $u$ towards item $i$, we adopt inner product:
\begin{equation}
    \hat{y}_{u,i} = \bm{z}_u^T\bm{z}_i,
    \label{eq:inner_product}
\end{equation}
where $\hat{y}_{u,i}$ is the predicted score of user $u$ and item $i$.

To optimize the primary recommendation task, we adopt Bayesian Personalized Ranking (BPR) loss~\cite{rendle-et-al:bpr2009}:
\begin{equation}
\label{eq:bpr_prediction}
    \mathcal{L}_{Rec} = \sum\limits_{(u,i,j)\in \mathcal{O}}- \log{\sigma(\hat{y}_{ui}- \hat{y}_{uj})
    },
\end{equation}
where $\mathcal{O}=\{(u,i,j)|(u,i)\in\mathcal{O}^+,(u,j)\in\mathcal{O}^-\}$. Here, $\mathcal{O^+}$ is the set of observed interactions and $\mathcal{O^-}$ is the set of unobserved ones.

\begin{figure*}[htbp]
\centering
{\includegraphics[width=0.75\linewidth]{{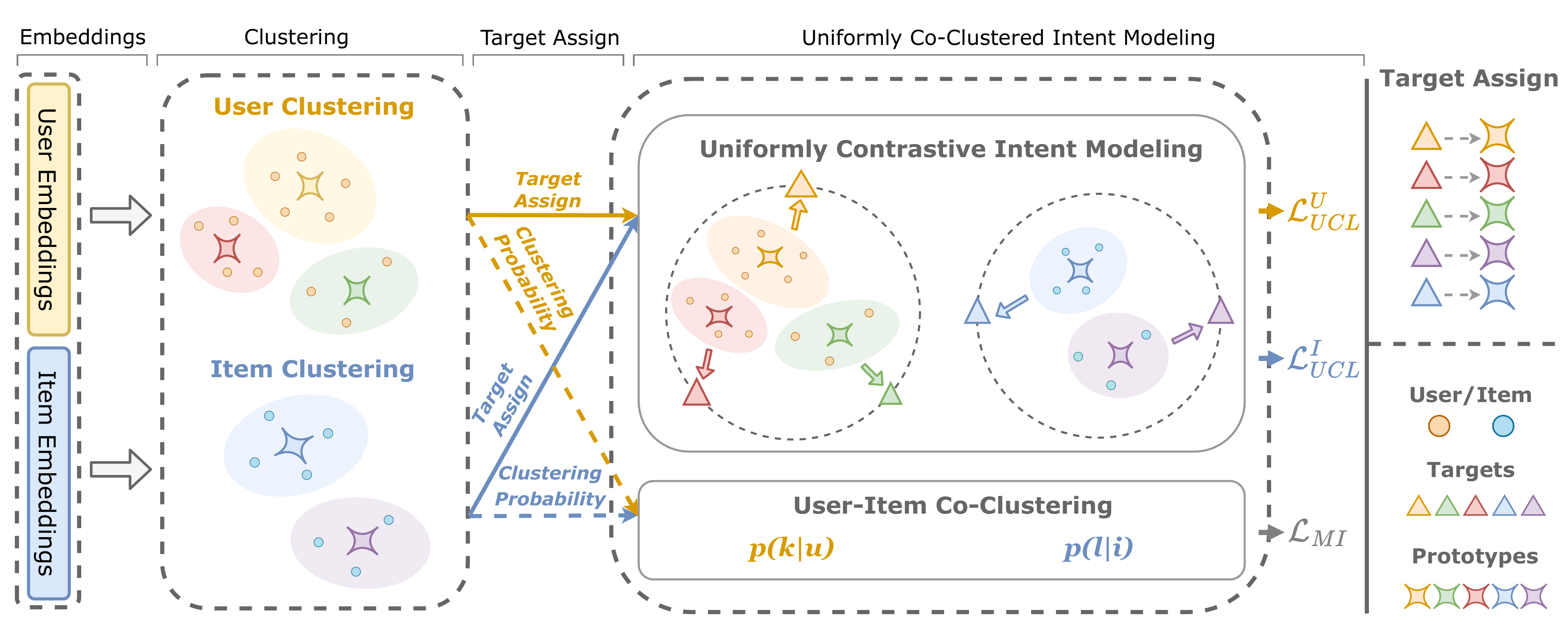}}}
\vskip -0.05in
\caption{An overview of \coreModuleName.}
\vskip -0.08in
\label{fig:entire_frame}
\end{figure*}

\subsection{\coreModuleName}
\label{subsec:uc3l}
As shown in Figure~\ref{fig:entire_frame}, the intrinsic idea of \coreModuleNameLow is two-fold: (1) We model the users' intents and items' properties via uniformly contrastive intent modeling, pushing instances towards their assigned targets; (2) To capture the co-clustered correlation information between users and items, we maximize the mutual information of their co-cluster. 

Specifically, we first pre-compute the locations of targets and apply a clustering algorithm on the embeddings of users and items to obtain their prototypes of them. Thereafter,  we assign targets to the clusters based on the similarities between their prototypes and targets. Lastly, we optimize the uniformly co-clustered objective.

\noindent \textbf{Targets Generation.} We first compute the positions for targets. Note that computing the optimal locations of targets does not require access to the data. It only requires knowing the number of prototypes and the dimension of the semantic space~\cite{Florian-et-al:dissect-cl,li-et-al:tsc_cvpr22}. Since the calculation is NP-hard, we approximate the optimal target positions of $C$ prototypes, $\{\bm{t}_i\}^C_{i=1}$, by gradient on minimizing the following loss: 
\begin{equation}
    \mathcal{L}_u\left(\left\{\bm{t}_i\right\}_{i=1}^C\right)=\frac{1}{C} \sum_{i=1}^C \log \sum_{j=1}^C e^{\bm{t}_i^T \cdot \bm{t}_j / \tau}.
    \label{eq:target_gen}
\end{equation}

\noindent\textbf{Clustering.}
We apply a clustering algorithm on the embeddings of instances (e.g., $\bm{z}^{(0)}_u$ or $\bm{z}^{(0)}_i$). Formally, the clustering process for users could be described as maximizing the following log-likelihood function:
\begin{equation}
    \sum_{u \in \mathcal{U}} \log p\left(\bm{z}^{(0)}_u \mid \Theta, \bm{R}\right)=\sum_{u \in \mathcal{U}} \log \sum_{\bm{c}_i \in C} p\left(\bm{z}^{(0)}_u, \bm{c}_i \mid \Theta, \bm{R}\right),
    \label{eq:clustering}
\end{equation}
where $\Theta$ is the set of model parameters and $\bm{R}$ is the interaction matrix between users and items. The prototype of user $u$ is denoted by $\bm{c}_i$. Symmetrically, we can define the clustering process for items.

\noindent\textbf{Uniformly Co-Clustered Intent Modeling.}

\noindent(1) \textit{Targets Assignment.} We use Hungarian Algorithm~\cite{Kuhn-et-al:assignAlgorithm} to find the assignment $\{\sigma^*_i\}^C_{i=1}$ that minimizes the distance between the prototypes and targets assigned to them. For the sake of computation efficiency, we replace the distance minimization by similarities maximization, which could be formalized as:
\begin{equation}
    \left\{\sigma_i^*\right\}_i=\underset{\left\{\sigma_i\right\}_i}{\arg \max } \frac{1}{C} \sum_{i=1}^C sim(\bm{t}_{\sigma_i}, \bm{c}_i),
\end{equation}
where $sim(\cdot,\cdot)$ is the cosine similarity function. $\bm{t}_{\sigma_i^*}$ denotes the target that is assigned to the instances whose prototype is $\bm{c}_i$. 

\noindent(2) \textit{Uniformly Contrastive Intent Modeling.} After the targets assignment, the proposed to minimize our devised uniformly contrastive learning objective as follows, which is based on InfoNCE~\cite{gutmann-et-al:2010infonce}:
\begin{equation}
\label{eq:ucl_u}
    \mathcal{L}_{UCL}^U=\sum_{u \in \mathcal{U}}-\log \frac{\exp \left(\bm{z}^{(0)}_u \cdot \bm{t}_{\sigma^*_u} / \tau\right)}{\sum_{\bm{t}_m \in U} \exp \left(\bm{z}^{(0)}_u \cdot \bm{t}_m / \tau\right)},
\end{equation}
where $U=\{\bm{t}_m\}^C_{m=1}$ is the set of pre-computed targets for users and $\bm{t}_{\sigma^*_u}$ is the target that is assigned to the prototype of the user $u$. $\tau$ is the temperature hyper-parameter. The objective on the item side is identical: 
\begin{equation}
\label{eq:ucl_i}
    \mathcal{L}_{UCL}^I=\sum_{i \in \mathcal{I}}-\log \frac{\exp \left(\bm{z}^{(0)}_i \cdot \bm{t}'_{\sigma^*_i} / \tau\right)}{\sum_{\bm{t}'_n \in U'} \exp \left(\bm{z}^{(0)}_i \cdot \bm{t}'_n / \tau\right)},
\end{equation}
where $U'=\{\bm{t}'_n\}^{C'}_{n=1}$ is the set of pre-computed targets for items and $\bm{t}'_{\sigma^*_i}$ is the target that is assigned to the prototype of the item $i$. The final objective for uniformly contrastive intent modeling is as follows:
\begin{equation}
    \mathcal{L}_{UCL} = \mathcal{L}_{UCL}^U + \alpha\mathcal{L}_{UCL}^I,
\end{equation}
where $\alpha$ is the hyper-parameter to balance the weight of above two losses.

\noindent(3) \textit{User-Item Co-Clustering.} To capture the compatibility between user
intents and item properties, we directly calculate and maximize the mutual information of the co-cluster. Following the definition of mutual information for co-cluster defined in~\cite{jing-et-al:nips22coin}, the objective function is:
\begin{equation}
    MI(K ; L)=\sum_{k, l} p(k, l) \log \frac{p(k,l)}{p(k) p(l)},
    \label{eq:mutual_info}
\end{equation}
where $k\in \{1,...,C_K\}$ is the index of user clusters and $l\in \{1,...,C_L\}$ is the index of item clusters. Furthermore, $K$ and $L$ are the random variables of user clusters and item clusters. 

Here, we denote the embeddings of user $u$ ($\bm{z}^{(0)}_u$) and item $i$ ($\bm{z}^{(0)}_i$) with $\bm{u}$ and $\bm{i}$ respectively. Since the parameters of $\bm{u}$ and $\bm{i}$ are independent, we have $p(k,l|\bm{u},\bm{i})=p(k|\bm{u})p(l|\bm{i})$, where $p(k|\bm{u})$ is the clustering probability that the prototype of user $u$ is $\bm{c}_k$ and $p(l|\bm{i})$ is the clustering probability that the prototype of item $i$ is $\bm{c}_l$. The clustering probability is implemented by the normalized inner product of the instance and its corresponding prototype, i.e., $p(k|\bm{u})=norm(\bm{u})^T norm(\bm{c}_k)$. Therefore, we calculate $p(k,l)$ as follows:
\begin{equation}
    p(k,l) = \sum_{\bm{u},\bm{i}} p(k,l|\bm{u},\bm{i}) p(\bm{u},\bm{i}),
    \label{eq:k_l_cluster_joint}
\end{equation}
where $p(\bm{u},\bm{i})$ is defined as the normalized inner product of $\bm{u}$ and $\bm{i}$: $p(\bm{u},\bm{i}) = norm(\bm{u})^T norm(\bm{i})$, if user $u$ has interacted with item $i$. Otherwise, $p(\bm{u},\bm{i})=0$. Given the joint distribution of co-cluster $p(k,l)$, we can calculate the marginal distributions:
\begin{equation}
    p(k) = \sum_l p(k,l),\;\;\;
    p(l) = \sum_k p(k,l),
    \label{eq:margin_distribution}
\end{equation}
By taking Equation~(\ref{eq:k_l_cluster_joint}) and Equation~(\ref{eq:margin_distribution}) back into Equation~(\ref{eq:mutual_info}), we can obtain the optimization objective for maximizing the mutual information of co-cluster: $\mathcal{L}_{MI} = -MI(K; L)$.

\subsection{Instance-Level Contrastive Learning}
In order to prevent the collapse of the representations, we propose to adopt an instance-level contrastive learning module. For the sake of simplicity, we adopt the cross-layer contrast~\cite{yu-et-al:XSimGCL} to implement it, whose objective function is as follows:
\begin{equation}
\label{eq:ins_u}
    \mathcal{L}_{INS}^U=\sum_{u \in \mathcal{U}}-\log \frac{\exp \left(\left(\bm{z}_u^{(k)} \cdot \bm{z}_u^{(0)} / \tau\right)\right)}{\sum_{v \in \mathcal{U}} \exp \left(\left(\bm{z}_u^{(k)} \cdot \bm{z}_v^{(0)} / \tau\right)\right)},
\end{equation}
where $\bm{z}_u^{(k)}$ is the normalized output of $k$-th layer of the collborative filtering backbone. Identically, the loss on the item side $\mathcal{L}_{INS}^I$ can also be obtained similarly.
Therefore, the overall instance-level contrastive learning loss is:
\begin{equation}
    \mathcal{L}_{INS} = \mathcal{L}^U_{INS}+\alpha\mathcal{L}^I_{INS}.
\end{equation}

\subsection{Final Objective and Training Strategy}
In this subsection, we introduce the final objective and the training strategy.

\textbf{Final Objective.} We joint train both recommendation loss and losses for intent modeling:
\begin{equation}
    \mathcal{L} = \mathcal{L}_{Rec} + \lambda_1 \mathcal{L}_{UCL} + \lambda_2 \mathcal{L}_{MI} + \lambda_3 \mathcal{L}_{INS} + \|\Theta\|_2,
    \label{eq:overall_loss}
\end{equation}
where $\lambda_1$, $\lambda_2$ and $\lambda_3$ are the hyper-parameters to balance the weight of different objectives and the regularization term. $\Theta$ is the set of collaborative filtering model parameters. 

\textbf{Training Strategy.}
At the beginning of each epoch, we fix the representations of instances (e.g., users and items) and apply kmeans algorithm over them to obtain their corresponding prototypes. Then, the model is trained by optimizing the loss presented in Equation~(\ref{eq:overall_loss}).
Furthermore, to improve the reliability of instance representations for intent modeling, we first warm-up the model by training only with the recommendation objective and instance-level contrastive learning for several epochs, formulated as follows:
\begin{equation}
    \label{eq:pre-train-loss}
    \mathcal{L}_{pre} = \mathcal{L}_{Rec} + \lambda_3\mathcal{L}_{INS} + \|\Theta\|_2
\end{equation}

\subsection{Theoretical Justification}

To study the rationale of maximizing $MI(K;L)$, we show that maximizing the mutual information of the co-cluster $K$ and $L$ is able to maximize the mutual information between representations of users and items, $MI(\bm{U};\bm{I})$~\cite{jing-et-al:nips22coin}. The proof of the theorem could be found in appendixes.

\begin{theorem}
\label{th:th1}
    The mutual information $MI(\bm{U};\bm{I})$ of bottom representations $\bm{U}$ and $\bm{I}$ is lower bounded by the mutual information of co-cluster $MI(K;L)$:
    \begin{equation}
        MI(K;L)\leq MI(\bm{U};\bm{I}).
    \end{equation}
\end{theorem}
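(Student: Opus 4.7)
The plan is to reduce the claim to two applications of the data processing inequality (DPI), exploiting the factorization $p(k,l\mid \bm{u}, \bm{i}) = p(k\mid \bm{u})\, p(l\mid \bm{i})$ that the paper builds into its co-clustering model. This factorization encodes exactly that, conditioned on $\bm{U}$, the cluster label $K$ is independent of $(\bm{I}, L)$, and conditioned on $\bm{I}$, the cluster label $L$ is independent of $(\bm{U}, K)$. Equivalently, the four variables form a Markov chain $K - \bm{U} - \bm{I} - L$, and everything else in the argument is a consequence of this structure.

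From this single Markov chain I would read off two sub-chains. First, $K - \bm{U} - L$ holds because $K$ depends on $(\bm{U},\bm{I})$ only through $\bm{U}$, so DPI gives $MI(K;L) \le MI(\bm{U}; L)$. Second, $\bm{U} - \bm{I} - L$ holds because $L$ depends on $(\bm{U},\bm{I})$ only through $\bm{I}$, so DPI gives $MI(\bm{U}; L) \le MI(\bm{U}; \bm{I})$. Composing these two inequalities yields $MI(K;L) \le MI(\bm{U}; \bm{I})$, which is the statement of the theorem. An equivalent, slightly more elementary derivation is to apply the chain rule $MI(\bm{U},\bm{I};L) = MI(\bm{I};L) + MI(\bm{U};L\mid \bm{I}) = MI(\bm{I};L)$ (since $\bm{U}\perp L\mid \bm{I}$) and then dominate $MI(K;L)$ by $MI(\bm{U};L)$ via the same independence on the other side; I would use whichever version reads more cleanly in the appendix.

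The main obstacle is not the information-theoretic step itself, which is essentially textbook once the Markov structure is in place, but rather verifying that the quantities the paper treats as probabilities are bona fide distributions so that DPI can be invoked at all. Specifically, $p(k\mid\bm{u}) = \text{norm}(\bm{u})^{T}\,\text{norm}(\bm{c}_k)$ and $p(\bm{u},\bm{i}) = \text{norm}(\bm{u})^{T}\,\text{norm}(\bm{i})$ are defined via cosine similarities and are neither automatically nonnegative nor normalized. I would therefore insert a short preparatory lemma that reinterprets these expressions as proper mass functions, for example by clipping to the nonnegative cone and renormalizing or, equivalently, by replacing the cosine inner product with its softmax counterpart. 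Once that is done, the conditional-independence factorization carries over to the rescaled distributions unchanged, and the DPI chain above closes the proof.
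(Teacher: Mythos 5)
Your proof is correct, but it takes a genuinely different route from the paper's. The paper proves the inequality from scratch: it first establishes a lemma via Jensen's inequality, namely $\log p(k)p(l) \geq \sum_{\bm{u},\bm{i}} p(\bm{u},\bm{i}\mid k,l)\log\frac{p(\bm{u},k)p(\bm{i},l)}{p(\bm{u},\bm{i}\mid k,l)}$, substitutes this bound into the definition of $MI(K;L)$, regroups the resulting four-variable sum, and only at the final step invokes the factorization $p(k,l\mid\bm{u},\bm{i})=p(k\mid\bm{u})p(l\mid\bm{i})$ to collapse everything to $MI(\bm{U};\bm{I})$. You instead recognize that this same factorization is equivalent to the Markov chain $K-\bm{U}-\bm{I}-L$ and finish with two applications of the data processing inequality, $MI(K;L)\leq MI(\bm{U};L)\leq MI(\bm{U};\bm{I})$. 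The two arguments have the same mathematical core --- the paper's Jensen-plus-rearrangement derivation is essentially an unrolled, self-contained proof of exactly the DPI steps you cite --- but yours is shorter, more modular, and makes transparent \emph{why} the bound holds (the cluster labels are coarsenings of the representations), at the cost of invoking DPI as a black box rather than remaining elementary. One further point in your favor: your observation that $p(k\mid\bm{u})=norm(\bm{u})^{T}norm(\bm{c}_k)$ and $p(\bm{u},\bm{i})=norm(\bm{u})^{T}norm(\bm{i})$ need not be nonnegative or normalized is a genuine gap in the paper's own treatment, not just yours --- the paper's Jensen step equally requires $p(\bm{u},\bm{i}\mid k,l)$ to be a bona fide distribution and the marginalization identities to hold, yet it never addresses this. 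Your proposed preparatory lemma (clipping/renormalizing or using a softmax in place of the raw cosine) would be needed to make either proof fully rigorous.
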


Theorem~\ref{th:th1} reveals that the mutual information of co-cluster $K$ and $L$ acts as the lower bound for the mutual information between the representations of users and items, $MI(\bm{U};\bm{I})$~\cite{jing-et-al:nips22coin}. By maximizing the mutual information $MI(K;L)$, we can effectively maximize the mutual information between the representations of users and items.

\section{Experiments}
In this section, we evaluate the performance of \ournameAbbr. Particularly, we mainly focus on answering following questions: \textbf{RQ1}: How does \ournameAbbr perform compared with exisint state-of-the-art CF models? \textbf{RQ2}: Can the proposed modules truly boost the overall recommendation performance?

\subsection{Experimental Setup}In this subsection we introduce the datasets, baselines and evaluation metrics.

\textbf{Datasets.}
In order to examine the effectiveness of our proposed framework \ournameAbbr, we utilize several real-world datasets that are various in size, density, and domains to conduct experiments and evaluation. These datasets are: Douban~\cite{douban}, Ciao~\cite{ciao}, MovieLens-1M~\cite{ml-1m}, and Amazon-Books~\cite{ammazon-books}. The datasets' statistics are presented in Table~\ref{tab:stats_datasets}. Following the setting in~\cite{lin-et-al:NCL}, we remove instances (i.e., users or items) with fewer than 15 interactions for Amazon-Books dataset. We divide each dataset into three parts: training set, validation set, and testing set. The ratio for the number of interactions for each dataset is $8:1:1$.
\begin{table}[!ht]

\centering
\label{tab:stats_datasets}
\scalebox{0.75}{
\begin{tabular}{ccccc}
\toprule
\textbf{Datasets} & \textbf{\#Users} & \textbf{\#Items} & \textbf{\#Interactions} & \textbf{Density} \\ \midrule
Douban& 2,848& 39,586& 894,887& 0.00794\\
Ciao& 7,375& 105,114& 284,086& 0.00036 \\
MovieLens-1M& 6,040& 3,629& 836,478& 0.03816\\
Amazon-Books& 58,145& 58,052& 2,517,437& 0.00075\\
\bottomrule
\end{tabular}
}
\caption{Statistics of the datasets}
\end{table}

\textbf{Baselines.} We compare the proposed model with the following baselines.
     (1) {BPRMF}~\cite{rendle-et-al:bpr2009} is a method based on matrix factorization (MF) by optimizing the BPR loss to learn the representations of users and items. 
     (2) {NeuMF}~\cite{he-et-al:NeuMF09he} proposes to replace the dot product in MF model with neural networks to learn the preference of users towards items.

         (3) {NGCF}~\cite{wang2019ngcf} models user-item interactions as a bipartite graph and utilizes GNNs to enhance CF methods.  
     
     (4) {LightGCN}~\cite{he-et-al:lightgcn} proposes to remove the redundant parts of GCN to improve its efficiency of it for recommendation modeling and achieve promising performance. 
     (5) {SGL}~\cite{Wu-et-al:he2021sgl} is the first work that introduces self-supervised learning for the graph to the recommendation.
     (6) {NCL}~\cite{lin-et-al:NCL} is an efficient prototypical contrastive learning framework for recommendation.
     (7) {XSimGCL}~\cite{yu-et-al:XSimGCL} is a graph-augmentation free contrastive learning method for recommendation. 
     
\textbf{Evaluation Metrics.}
To evaluate the performance of top-N recommendation, we follow the setting in~\cite{lin-et-al:NCL} to adopt two widely-used metrics: Recall@N and NDCG@N. Following~\cite{lin-et-al:NCL}, we set N to 10, 20, and 50 for consistency.

\textbf{Implementation Details.} Our proposed framework \ournameAbbr is implemented based on an open source framework \textsc{RecBole}\footnote{https://github.com/RUCAIBox/RecBole/tree/master}. As for the baseline NCL~\cite{lin-et-al:NCL}, we utilize the implementation provided in the original paper. The rest of the baselines are implemented on top of the open source library \textsc{QRec}\footnote{https://github.com/Coder-Yu/QRec}. Since we strictly follow settings in~\cite{lin-et-al:NCL}, the experimental results of some baselines (i.e., BPRMF, NeuMF, NGCF, LightGCN, SGL and NCL) on MovieLens-1M, and Amazon-Books datasets are referred from the performance reported in~\cite{lin-et-al:NCL}. For a fair comparison, we follow~\cite{lin-et-al:NCL} to adopt the Adam optimizer, and the embedding size is set to 64. All the experiments are run with Pytorch (version 1.12) on Nvidia A30 GPU (Memory: 24GB, Cuda version: 11.3).

\subsection{Overall Comparison.}

\begin{table*}[!ht]
\centering
\begin{adjustbox}{max width=2.0\columnwidth}
\begin{threeparttable}
\centering

\label{tab:exp-main}

\begin{tabular}{@{}c|c|cc|cc|ccc|c@{}}
\toprule[1.2pt]
Dataset & Metric & BPRMF & NeuMF  & NGCF  & LightGCN & SGL & NCL &XSimGCL&\ournameAbbr  \\ \midrule \midrule
 \multirow{6}{*}{Douban} & Recall@10 & 0.0408 & 0.0407 &  0.0493  & 0.0573 & 0.0652 & {0.0721}&\underline{0.0732} &\textbf{0.0753}\\
 & NDCG@10 & 0.1216 & 0.1230 &  0.1304  & 0.1270 & 0.1310 & {0.1316}&\underline{0.1345} &\textbf{0.1360}\\
 & Recall@20 & 0.0690 & 0.0734 &  0.0840  & 0.0878 & 0.0982 & \underline{0.1143}&0.1123 &\textbf{0.1165}\\
 & NDCG@20 & 0.1197 & 0.1226 &  {0.1298}  & 0.1280 & 0.1293 & 0.1289&\underline{0.1300} &\textbf{0.1322}\\
 & Recall@50 & 0.1354 & 0.1417 &  0.1596  & 0.1590 & 0.162 & {0.1955}&\underline{0.1979} &\textbf{0.2005}\\
 & NDCG@50 & 0.1277 & 0.1317 &  0.1401  & 0.1448 & \underline{0.1449} & 0.1444&0.1440 &\textbf{0.1477}\\\midrule
 
\multirow{6}{*}{Ciao} & Recall@10 & 0.0152& 0.0231  &  0.0237  & 0.0306 & 0.0323 & {0.0352}& \underline{0.0378}&\textbf{0.0397}\\
 & NDCG@10   & 0.0127& 0.0220 &  0.0222  & 0.0232 & {0.0244} & 0.0236 &\underline{0.0258}&\textbf{0.0259}\\
 & Recall@20 & 0.0247 & 0.0375 &  0.0391  & 0.0448 & 0.0452 & {0.0528} &\underline{0.0569}&\textbf{0.0581}\\
 & NDCG@20   & 0.0157& 0.0264  &  0.0267  & 0.0281 & \underline{0.0311} & 0.0285 &\underline{0.0311}&\textbf{0.0315}\\
 & Recall@50 & 0.0430 & 0.0637  &  0.0685  & 0.0746 & 0.0621 & {0.0822} &\underline{0.0836}&\textbf{0.0857}\\
 & NDCG@50   & 0.0211& 0.0341  &  0.0354  & {0.0379} & \underline{0.0373} &  0.0354 &0.0361&\textbf{0.0382}\\ \midrule

\multirow{6}{*}{MovieLens-1M} & Recall@10 & {0.1804} & {0.1657} & {0.1846}  & {0.1876} & {0.1888} & {0.2057} &\underline{0.2084}&\textbf{0.2098}\\
 & NDCG@10 & {0.2463} & {0.2295} & {0.2528}  & 0.2514 & 0.2526 & {0.2732} &\underline{0.2766}&\textbf{0.2768}\\
 & Recall@20 & {0.2714} & {0.2520} &  {0.2741}  & 0.2796 & 0.2848  & 0.3037 &\underline{0.3063}&\textbf{0.3106}\\
 & NDCG@20 & {0.2569} & {0.2400} &  {0.2614}  & 0.2620  & 0.2649 & {0.2843} &\underline{0.2889}&\textbf{0.2894}\\
 & Recall@50 & {0.4300} & {0.4122} & {0.4341}  & 0.4469 & 0.4487 & {0.4686}&\underline{0.4760} &\textbf{0.4769}\\
 & NDCG@50 & {0.3014} & {0.2851} & {0.3055}  & 0.3091 & 0.3111  & {0.3300}&\underline{0.3350} &\textbf{0.3356}\\ \midrule

\multirow{6}{*}{Amazon-Books} & Recall@10 & 0.0607 & 0.0507 &  0.0617 & 0.0797 & 0.0898 & {0.0933}&\underline{0.0976} &\textbf{0.0989}\\
 & NDCG@10   & 0.0430  &  0.0351 & 0.0427  & 0.0565 & 0.0645 & {0.0679}&\underline{0.0704} &\textbf{0.0714}\\
 & Recall@20 & 0.0956 &  0.0823 &  0.0978  & 0.1206 & 0.1331 & {0.1381}&\underline{0.1431} &\textbf{0.1444}\\
 & NDCG@20   & 0.0537 &  0.0447 & 0.0537  & 0.0689 & 0.0777 & {0.0815}&\underline{0.0842} &\textbf{0.0854}\\
 & Recall@50 & 0.1681 &  0.1447 &  0.1699  & 0.2012 & 0.2157 & {0.2175}&\underline{0.2251} &\textbf{0.2269}\\
 & NDCG@50   & 0.0726 &  0.0610  & 0.0725 & 0.0899 & 0.0992 & {0.1024}&\underline{0.1059} &\textbf{0.1071}\\ 
\bottomrule[1.2pt]
\end{tabular}

\end{threeparttable}
\end{adjustbox}
\caption{Overall Performance Comparison}
\vskip -0.1in
\vskip -0.05in
\end{table*}

To evaluate and verify the effectiveness of the proposed \ournameAbbr, we conduct extensive comparison experiments between it and three types of baselines: MF-based methods (i.e., BPRMF~\cite{rendle-et-al:bpr2009} and NeuMF~\cite{he-et-al:NeuMF09he}), GNNs-based methods (i.e., NGCF~\cite{wang2019ngcf} and LightGCN~\cite{he-et-al:lightgcn}) and CL-based methods (i.e., SGL~\cite{Wu-et-al:he2021sgl} and NCL~\cite{lin-et-al:NCL}). The experiment results are elaborated in Table~\ref{tab:exp-main}. We have following observations: (1) Across all the metrics on these five datasets, the GNN-based models, such as NGCF, consistently outperform the traditional MF-based models. This improvement mainly originates from the ability of GNNs-based models in encoding high-order information. Promising performance of LightGCN verifies the decency of simplified GNNs structure, indicating that redundant parameters of GNNs could lead to degradation of performance~\cite{he-et-al:lightgcn}. 

(2) As for our proposed framework, we can see that it consistently outperforms all the baselines, answering \textbf{RQ1}. Compared with previous proposed ssl-based methods, the decent performance of \ournameAbbr comes from the accurate intent modeling, which is capable of revealing the intents of users and capturing the compatibility between users' intents and items' properties.

\subsection{Ablation Study} 
In this subsection, we conduct an ablation study to investigate the effectiveness of each module of \ournameAbbr. 
We compare \ournameAbbr with its two variants: (1) {w/o UCL} denotes the variant without the uniformly contrastive intent modeling and (2) {w/o MI} denotes the variant without the user-item co-clustering module.

From the results shown in Figure~\ref{fig:performance-ablation}, \ournameAbbr outperforms these two variants. Specifically, eliminating the uniformly contrastive intent modeling module could significantly undermine performance. Under most metrics, the benefit of directly adding uniformly contrastive learning modules to \ournameAbbr is much larger than the negative effects of undermining the correlation between representations of users and items. This is also verified by the results presented in Figure~\ref{fig:performance-ablation}, where the variant that removes the user-item co-clustering module performs much better than the variant without uniformly contrastive intent modeling, answering \textbf{RQ2}. 

\begin{figure}[t]
\centering
\includegraphics[width=0.9\columnwidth]{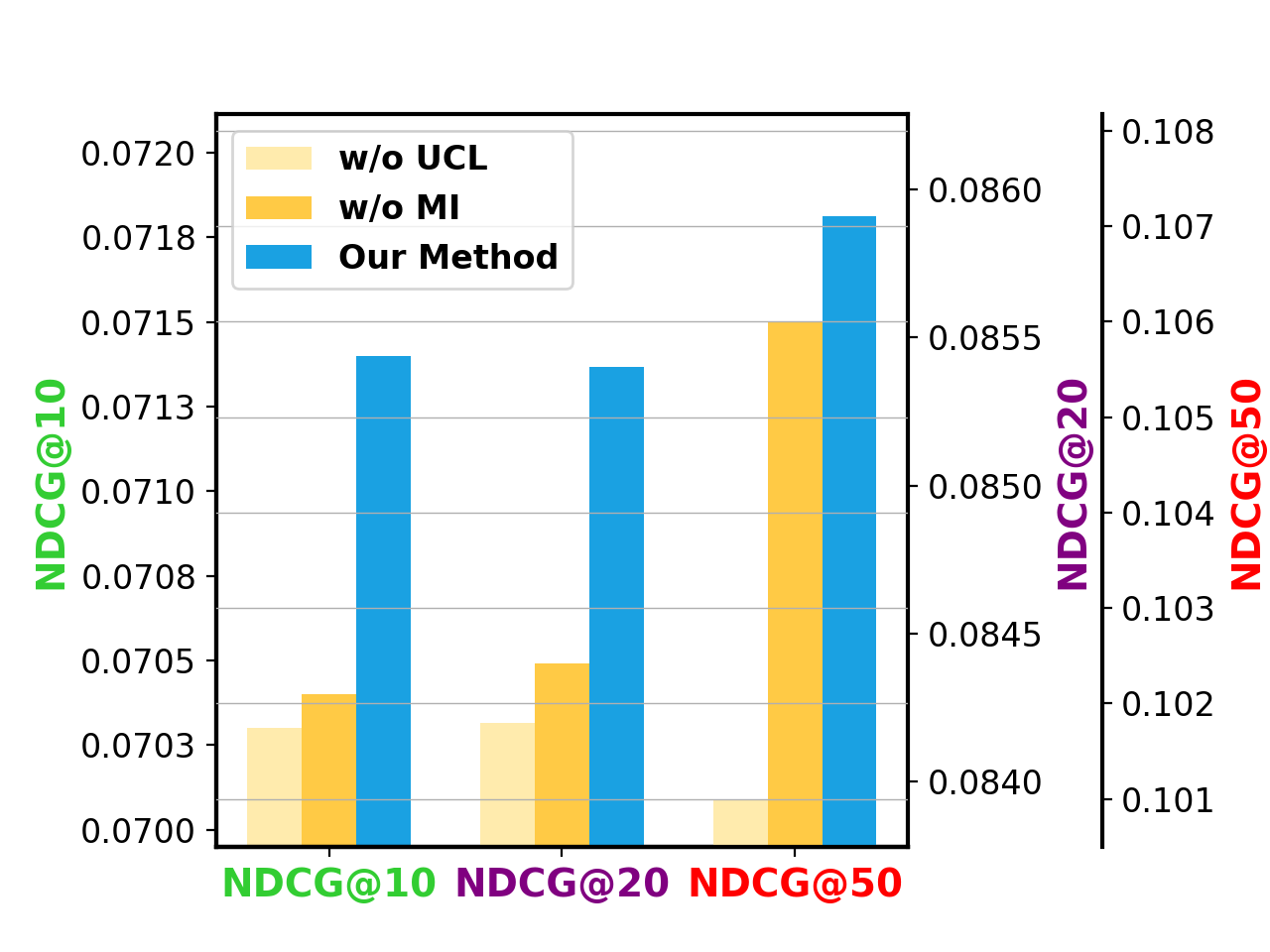}
\vskip -0.15in
\caption{Ablation study on the effectiveness of different modules. \textit{w/o UCL} denotes the variant without the uniformly contrastive intent modeling module and \textit{w/o MI} denotes the one without co-clustering module. This figure shows the results on dataset Amazon-Books and results on other datasets could be found in the appendix.}
\label{fig:performance-ablation}
\vskip -0.05in
\end{figure}
\subsection{Further Analyasis}
In this subsection, we study the effect of different model hyper-parameters. Specifically,  for the architecture of the framework, we investigate how the noise rate $\Delta$ and the number of intents influence the performance of our proposed \ournameAbbr. For simplicity, we set the same number of prototypes for the users and items. Besides, we study the loss weights (i.e., $\lambda_1$, $\lambda_2$ and $\lambda_3$). Additionally, we also explore the performance with different numbers of layers and the temperature parameter $\tau$. 

\textit{Effect of the number of intents.} To study how the uniformly contrastive learning objective affects the model's performance, we set the range of the number of prototypes to {500, 1000, 1500, 2000, 2500}. As shown in Table~\ref{tab:ablation_prototypes_num}, We can find that setting the value to 1000 could achieve the best results on most datasets. However, the variety of the performance with different numbers of prototypes does not lead to a large variance in performance. We argue that this is due to the advantage of incorporating semantic structure to connect instances that are not physically connected in the user-item interaction graph. 

\textit{Effect of the noise rate $\Delta$.} Here, we study how varying the value of noise rate influences the recommendation performance of our proposed framework. The results are shown in Table~\ref{tab:ablation_noise_rate}. 
We observe that too large a value of noise rate could lead to performance decreasing since it deteriorates the quality of the learned representations. As stated in~\cite{yu-et-al:XSimGCL}, a proper selection for the magnitude of noise rate does not affect the semantics of representations and it could serve as a kind of augmentation method for contrastive learning.

\textit{Effect of the loss weights.} We study the influence of different values of the loss weights  $\lambda_1$, $\lambda_2$ and $\lambda_3$ for contrastive learning objectives and user-item co-clustering objectives. The results are shown in Figure~\ref{fig:hyperparameter-lambdas}, from which we could find that properly selecting values for these weights is significant. This verifies the necessity of these three modules. 

\textit{Effect of temperature parameter $\tau$.} In previous researches~\cite{simclr_v1} on contrastive learning, the temperature parameter $\tau$ defined in Equations~(\ref{eq:ucl_u},\ref{eq:ucl_i},\ref{eq:ins_u}) plays a vital role in affecting performance on the downstream tasks. To explore the effect of $\tau$, we present the performances of the proposed method under different values of $\tau$ in Figure~\ref{fig:hyperparameter-temperature-tau}. 

\textit{Effect of the number of layers.} We vary the number of layer in the range of $\{$2,3,4,5$\}$. Since we adopt cross-layer contrast to implement instance-level contrastive learning, setting the number of layers to 1 would be invalid. In Table~\ref{tab:ablation-num-layers}, we report the performance of LightGCN, \ournameAbbr and the variant of \ournameAbbr that eliminates the module of contrastive learning and user-item co-clustering.

\begin{table}[]
\centering
\label{tab:ablation-num-layers}
\scalebox{0.7}{
\begin{tabular}{cccccc}
\toprule[1.2pt]
\multicolumn{2}{c}{Datasets}             & \multicolumn{2}{c}{Ciao}          & \multicolumn{2}{c}{Amazon-Books}  \\ \hline
\multicolumn{2}{c}{Metrics}              & Recall@10       & NDCG@10         & Recall@10       & NDCG@10         \\ \midrule\midrule
\multirow{3}{*}{2 Layers} & LightGCN      & 0.0341          & 0.0232          & 0.0793          & 0.0564          \\
                          & INS-CL & 0.0359          & 0.0246          & 0.0962          & 0.0697          \\
                          & \ournameAbbr        & \textbf{0.0370}  & \textbf{0.0255} & \textbf{0.0963} & \textbf{0.0700}   \\ \hline
\multirow{3}{*}{3 Layers} & LightGCN      & 0.0359          & 0.0248          & 0.0771          & 0.0551          \\
                          & INS-CL & 0.0378          & 0.0258          & 0.0976          & 0.0704          \\
                          & \ournameAbbr        & \textbf{0.0397} & \textbf{0.0264} & \textbf{0.0989} & \textbf{0.0714} \\ \hline
\multirow{3}{*}{4 Layers} & LightGCN      & 0.0349          & 0.0239          & 0.0743          & 0.0531          \\
                          & INS-CL & 0.0371          & 0.0251          & 0.0972          & 0.0697          \\
                          & \ournameAbbr        & \textbf{0.0379} & \textbf{0.0257} & \textbf{0.0975} & \textbf{0.0701} \\ \hline
\multirow{3}{*}{5 Layers} & LightGCN      & 0.0252          & 0.0167          & 0.0706          & 0.0505          \\
                          & INS-CL & 0.0383          & 0.0259          & 0.0962          & 0.0695          \\
                          & \ournameAbbr        & \textbf{0.0387} & \textbf{0.0259} & \textbf{0.0963} & \textbf{0.0695} \\ \bottomrule[1.2pt]
\end{tabular}

}
\caption{Performance comparison of different layers on Ciao and Amazon-Books over Recall@10 and NDCG@10. INS-CL denote the variant of \ournameAbbr without  uniformly contrastive intent modeling module and co-clustering module.}
\end{table}

\begin{figure}[t]
\centering
\includegraphics[width=0.7\columnwidth]{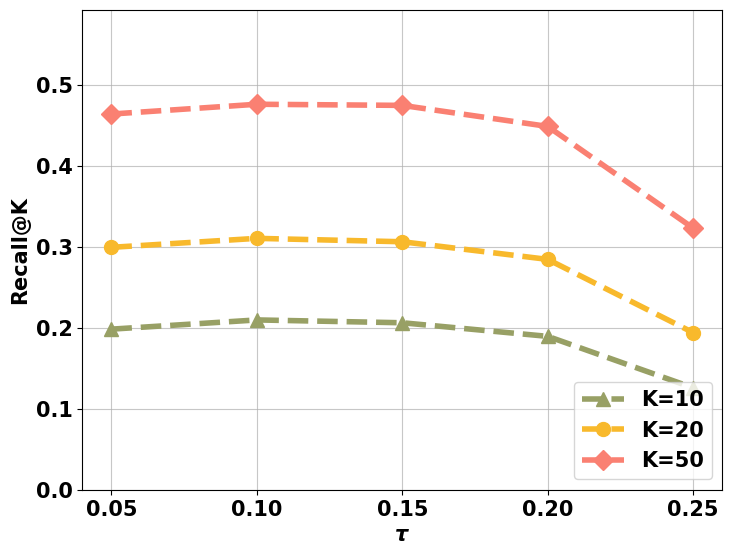}
\vskip -0.1in
\caption{Performance comparison w.r.t. different values of temperature hyper-parameter $\tau$ on dataset MovieLens-1M under metric Recall@K. The results on other datasets could be found in the appendix.}
\label{fig:hyperparameter-temperature-tau}
\vskip -0.1in
\end{figure}

\begin{table}[]
\centering
\scalebox{0.65}{
\begin{tabular}{cccccc}
\toprule[1.2pt]
Dataset                       & N. of I. & Recall@10       & Recall@20       & NDCG@10         & NDCG@20          \\ \midrule\midrule
\multirow{5}{*}{Douban}       & 500                  & 0.0740          & 0.1133          & 0.1353          & 0.1307  \\
                              & 1000                 & \textbf{0.0753} & \textbf{0.1165} & \textbf{0.1360} & \textbf{0.1322} \\
                              & 1500                 & 0.0733          & 0.1140          & 0.1341          & 0.1301  \\
                              & 2000                 & 0.0731          & 0.1151          & 0.1356          & 0.1318\\
                              & 2500                 & 0.0732          & 0.1134          & 0.1341          & 0.1299 \\ \hline
\multirow{5}{*}{Ciao}         & 500                  & 0.0377          & 0.0576          & 0.0254          & 0.0309         \\
                              & 1000                 & \textbf{0.0397} & \textbf{0.0580} & \textbf{0.0264} & \textbf{0.0316}\\
                              & 1500                 & 0.0373          & 0.0578          & 0.0253          & 0.0309          \\
                              & 2000                 & 0.0378          & \textbf{0.0580} & 0.0254          & 0.0309         \\
                              & 2500                 & 0.0372          & 0.0579          & 0.0252          & 0.0309          \\ \hline
\multirow{5}{*}{MovieLens-1M} & 500                  & 0.2087          & 0.3102          & \textbf{0.2769} & 0.2887          \\
                              & 1000                 & 0.2094          & 0.3103          & 0.2767          & 0.2887          \\
                              & 1500                 & \textbf{0.2098} & \textbf{0.3106} & 0.2761          & 0.2887          \\
                              & 2000                 & 0.2088          & 0.3101          & 0.2768          & \textbf{0.2889} \\
                              & 2500                 & 0.2085          & 0.3097          & 0.2766          & 0.2885          \\ \hline
\multirow{5}{*}{Amazon-Books} & 500                  & 0.0971          & 0.1423          & 0.0703          & 0.0841          \\
                              & 1000                 & \textbf{0.0989} & \textbf{0.1444} & \textbf{0.0714} & \textbf{0.0854}\\
                              & 1500                 & 0.0972          & 0.1428          & 0.0703          & 0.0842          \\
                              & 2000                 & 0.0972          & 0.1428          & 0.0703          & 0.0842          \\
                              & 2500                 & 0.0973          & 0.1429          & 0.0704          & 0.0843         \\ \bottomrule[1.2pt]
\end{tabular}
}
\caption{Effect of different numbers of intents on the performance. \textit{N. of I.} denotes the number of intents.}
\label{tab:ablation_prototypes_num}
\end{table}

\begin{table}[]
\centering
\scalebox{0.65}{
\begin{tabular}{cccccc}
\toprule[1.2pt]
Dataset                        & Noise Rate &  Recall@10 & Recall@20        &  NDCG@10 &  NDCG@20\\ \midrule\midrule
                               & 2.00E-01   & 0.0463          & 0.1373          & 0.0904                               & 0.0866 \\
                               & 2.00E-02   & 0.0579  & 0.1622          & 0.1073                               & 0.1036 \\
                               & 2.00E-03   & \textbf{0.0753}& \textbf{0.2005} & \textbf{0.136}                       & \textbf{0.1322}\\
                               & 2.00E-04   & 0.059  & 0.165           & 0.114                                & 0.1103\\
\multirow{-5}{*}{Douban}       & 2.00E-05   & 0.0725  & 0.199           & 0.135                                & 0.1315 \\ \hline
                               & 2.00E-01   & 0.0387 & \textbf{0.0941} & \textbf{0.0264}                      & \textbf{0.0316}\\
                               & 2.00E-02   & 0.0329  & 0.076           & 0.0231                               & 0.0275 \\
                               & 2.00E-03   & \textbf{0.0397}   & 0.0883          & 0.0264                               & 0.0316  \\
                               & 2.00E-04   & 0.0373    & 0.0884          & 0.0253                               & 0.0309  \\
\multirow{-5}{*}{Ciao}         & 2.00E-05   & 0.0366    & 0.0877          & 0.0248                               & 0.0306  \\ \hline
                               & 2.00E-01   & 0.1656    & 0.4071          & 0.2299                               & 0.2394    \\
                               & 2.00E-02   & 0.1737          & 0.4329          & 0.2401                               & 0.2509  \\
                               & 2.00E-03   & \textbf{0.2098}                        & \textbf{0.3106} & \textbf{0.2761}                      & \textbf{0.2887}\\
                               & 2.00E-04   & 0.0058                                 & 0.0111  & 0.0087                               & 0.0100 \\
\multirow{-5}{*}{MovieLens-1M} & 2.00E-05   & 0.0053                                 & 0.0102 & 0.0081                               & 0.0093  \\ \hline
                               & 2.00E-01   & 0.0959                                 & 0.1416& 0.069                                & 0.083\\
                               & 2.00E-02   & 0.0971                                 & 0.1423 & 0.0704                               & 0.0842\\
                               & 2.00E-03   & \textbf{0.0989}                        & \textbf{0.1444} & \textbf{0.0714}                      & \textbf{0.0854} \\
                               & 2.00E-04   & 0.0943                                 & 0.1398  & 0.0683                               & 0.0822  \\
\multirow{-5}{*}{Amazon-Books} & 2.00E-05   & 0.0944                                 & 0.1391  & 0.0686                               & 0.0822 \\ \bottomrule[1.2pt]
\end{tabular}
}
\caption{Performance comparison  w.r.t. different noise rate.}
\vskip -0.1in

\label{tab:ablation_noise_rate}
\end{table}

\begin{figure*}[]
\centering
\subfigure[Douban - $\lambda_1$]{\label{fig:lambda1-target-ssl_douban_recall}\includegraphics[width=0.63\columnwidth]{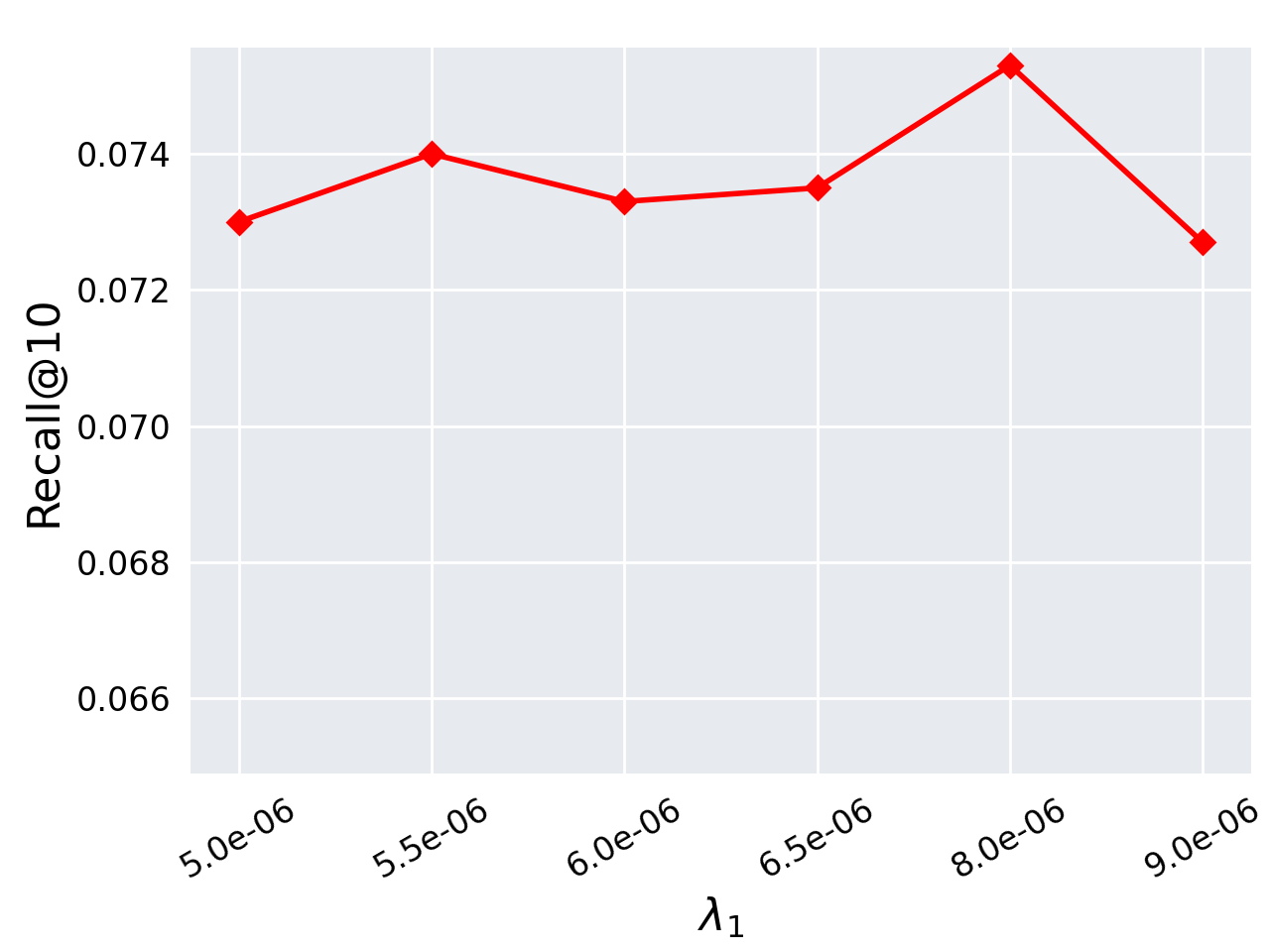}}
\subfigure[Douban - $\lambda_2$]{\label{fig:lambda2-cluster-mi_douban_recall}\includegraphics[width=0.63\columnwidth]{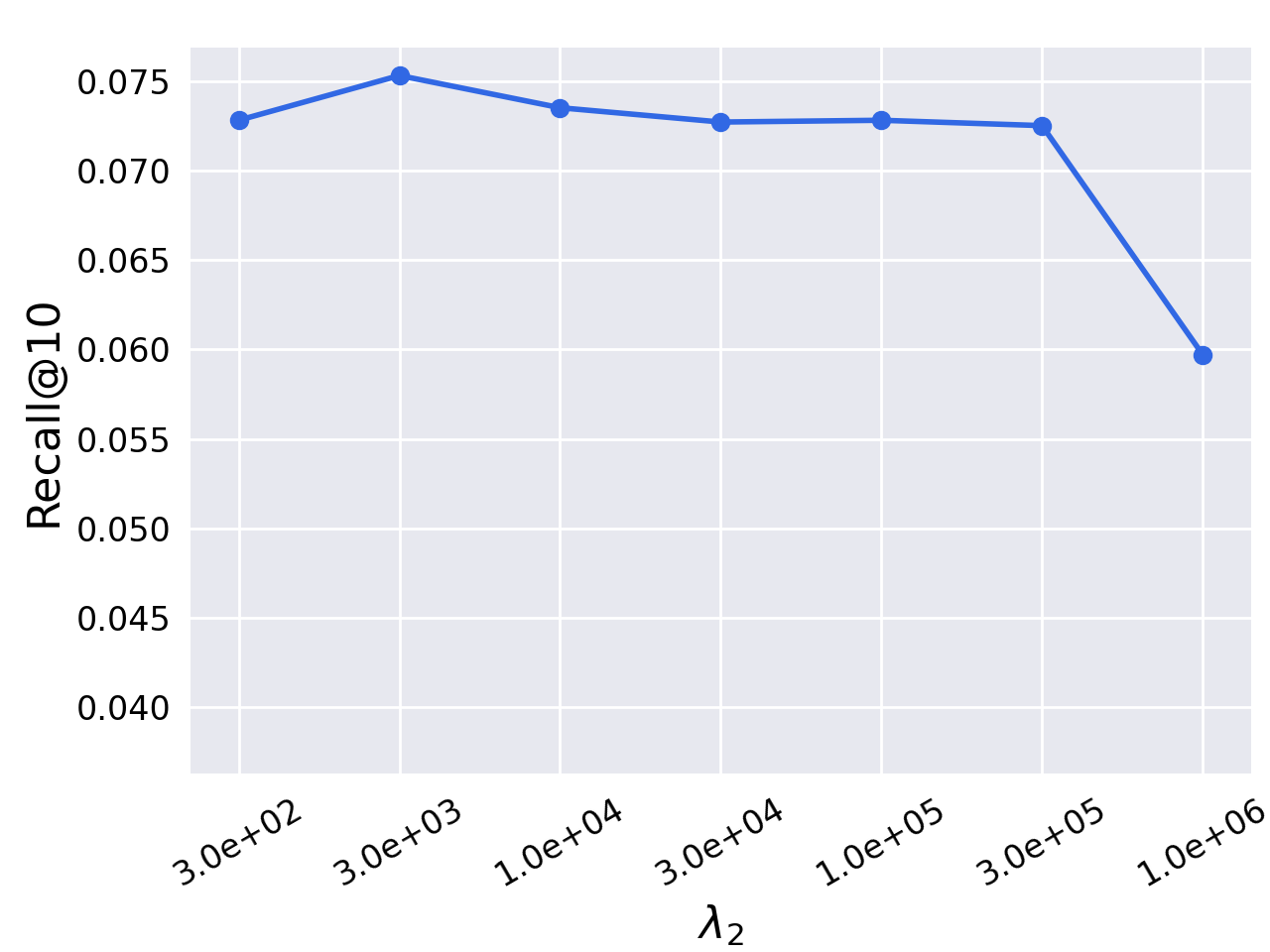}}
\subfigure[Douban - $\lambda_3$]{\label{fig:lambda3-ssl-reg_douban_recall}\includegraphics[width=0.63\columnwidth]{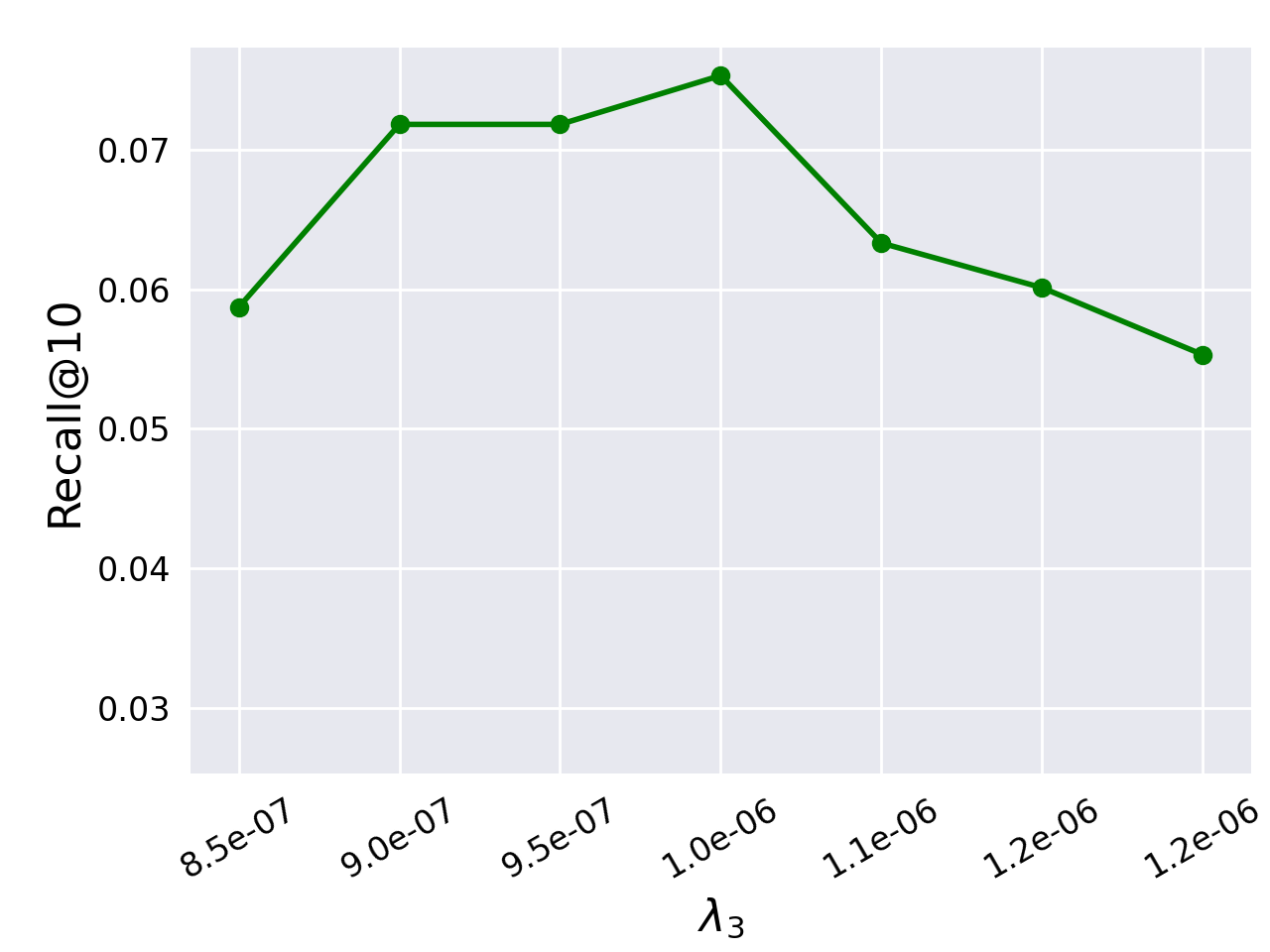}}

\vskip -0.15in
\caption{The effect of hyper-parameter $\lambda_1$, $\lambda_2$ and $\lambda_3$ under metric Recall@10 on dataset Douban. Results on other datasets could be found in the appendix.}
\label{fig:hyperparameter-lambdas}
\vskip -0.15in
\end{figure*}

\section{Related Works}
Over the past few years, graph neural networks (GNNs)~\cite{fan2019graphRec,ieeeComp23shengcai,wu2023contgcn} have exhibited excellent ability in modeling graph data and have been applied to put forward the frontier of a wide range of fields, including recommendation~\cite{liu2022prec,qijiong23www,wenjing23arXiv}. A variety of GNNs-based recommendation models (e.g., graphRec~\cite{fan2019graphRec}, NGCF~\cite{wang2019ngcf}, lightGCN~\cite{he-et-al:lightgcn}) have been proposed to explore the application of GNNs in recommendation tasks and achieved promising performance. Among those GNNs-based models, the models that originate from graph convolution networks (GCNs)~\cite{kipf2016gcn,ruihe23ITETCI} are the most popular. The GCNs-based recommendation methods model user-item interactions as bipartite graphs and then aggregate the neighborhood information in the graph layer by layer to enrich the representation of nodes, fulfilling graph reasoning~\cite{yu-et-al:XSimGCL}. NGCF~\cite{wang2019ngcf} follows the design of GCN~\cite{kipf2016gcn} and adapts it to the recommendation scenario, obtaining state-of-the-art performance. To further improve the effectiveness of GCNs on recommendation tasks, LightGCN is proposed~\cite{he-et-al:lightgcn} and it abandons unnecessary operations and parameters in GCN. Currently, LightGCN is one of the most prevalent variants of GCNs in recommendation due to its concise design and decent performance. Due to its effectiveness, LightGCN has been adopted as a collaborative filtering backbone in tons of follow-up models as well as the framework proposed in this paper.
Recent emergence of contrastive learning (CL) has brought promising improvement among various fields~\cite{cikm22sslRec, simclr_v1, jiajun2023iclr, jiajun22icml,wenqi23diffusionSurvey}. Therefore, a wave of research has been proposed to explore the effectiveness of CL in recommendation~\cite{wu-et-al:DcRec_cikm22,Wu-et-al:he2021sgl,yu-et-al:XSimGCL,lin-et-al:NCL}. The general paradigm of CL-based recommendation models devise a CL-based pretext task and then joint train it with the primary recommendation task. To mitigate the problem caused by the imbalanced-degree distribution and representations' vulnerability to noisy interactions, SGL~\cite{Wu-et-al:he2021sgl} proposes an instance-level contrastive learning framework for recommendation task, which is the first attempt of self-supervised learning in recommendation. 

Despite their decent performance, those methods either overlook the significance of modeling intent relations and compatibility between the intents and items' properties. Our proposed \ournameAbbr can well address these problems.

\section{Conclusion}
To model the users' intents and compatibility between them and items' properties, we propose a novel method, named uniformly co-clustered contrastive intent modeling, consisting of uniformly contrastive intent modeling and user-item co-clustering. Specifically, the uniformly contrastive intent modeling pushes the representations with similar intents/properties towards corresponding targets. Thereafter, to model the compatibility between the intents and properties, we derive the user-item co-clustering module to maximize the mutual information between users and items. Theoretical justification for the design of the user-item co-clustering module is provided. Extensive experiments on several real-world datasets demonstrate the effectiveness of the proposed method.


\newpage
\clearpage

\bibliographystyle{sdm}
\bibliography{references}
\appendix

\section{Proof for User-Item Co-Clustering Module}
In this section, we provide the proof for theorem~\ref{th:th1}. Before proving the theorem, we would like to introduce lemma~\ref{lemma:lemma1}.
\begin{lemma}
    \label{lemma:lemma1}
    While the bottom representations of of users (e.g., $\bm{u}\in\bm{U}$) and items (e.g., $\bm{i}\in\bm{I}$) are independent, the following inequality holds:
    \begin{equation}
        \label{ineq:lemm1}
        \log p(k) p(l) \geq \sum_{\bm{u}, \bm{i}} p(\bm{u}, \bm{i} \mid k, l) \log \frac{p(\bm{u}, k) p(\bm{i}, l)}{p(\bm{u}, \bm{i} \mid k, l)},
    \end{equation}
    where $k\in \{1,...,C_K\}$, $l\in \{1,...,C_L\}$, $\bm{u}\in\bm{U}$ and $\bm{i}\in\bm{I}$.
\end{lemma}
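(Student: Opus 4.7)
The plan is to recognize this inequality as a direct application of Jensen's inequality to the concave function $\log$. The right-hand side is an expectation of a log-ratio under the conditional distribution $p(\bm{u},\bm{i}\mid k,l)$; namely, viewing $q(\bm{u},\bm{i}) := p(\bm{u},\bm{i}\mid k,l)$ as a probability measure on pairs $(\bm{u},\bm{i})$ for each fixed $(k,l)$, the RHS is $\mathbb{E}_{q}\bigl[\log X\bigr]$ with random variable $X(\bm{u},\bm{i}) = \frac{p(\bm{u},k)\,p(\bm{i},l)}{q(\bm{u},\bm{i})}$.

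The first step is to invoke Jensen's inequality in the form $\mathbb{E}_{q}[\log X] \leq \log \mathbb{E}_{q}[X]$, valid since $\log$ is concave. The second step is a short calculation: the $q(\bm{u},\bm{i})$ in the denominator of $X$ cancels against the weighting, so that $\mathbb{E}_{q}[X] = \sum_{\bm{u},\bm{i}} p(\bm{u},k)\,p(\bm{i},l)$. The third step is to factor this double sum as $\bigl(\sum_{\bm{u}} p(\bm{u},k)\bigr)\bigl(\sum_{\bm{i}} p(\bm{i},l)\bigr)$ and then marginalize to get $p(k)\,p(l)$, recovering exactly the logarithm on the left-hand side. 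Chaining these steps gives the claimed inequality.

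There is no serious technical obstacle; the main item to verify is that $q(\bm{u},\bm{i}) = p(\bm{u},\bm{i}\mid k,l)$ is indeed a probability distribution summing to one, which follows from its definition as a conditional. The independence assumption stated in the hypothesis is not strictly needed for the Jensen step itself (the factorization of $\sum_{\bm{u},\bm{i}} p(\bm{u},k)\,p(\bm{i},l)$ is purely algebraic), but it justifies the forms $p(\bm{u},k) = p(k\mid\bm{u})\,p(\bm{u})$ and $p(\bm{i},l) = p(l\mid\bm{i})\,p(\bm{i})$ used implicitly when the lemma is applied to derive the theorem, where $p(k,l\mid\bm{u},\bm{i}) = p(k\mid\bm{u})\,p(l\mid\bm{i})$ is invoked. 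I would therefore close the proof by remarking that the same Jensen-bound strategy is what later enables bounding $MI(K;L)$ below by $MI(\bm{U};\bm{I})$ in Theorem~\ref{th:th1}.
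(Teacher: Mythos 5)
Your proposal is correct and is essentially the paper's own proof: both arguments rest on the identity $p(k)p(l)=\sum_{\bm{u},\bm{i}}p(\bm{u},k)\,p(\bm{i},l)$ together with Jensen's inequality applied under the distribution $p(\bm{u},\bm{i}\mid k,l)$, with the paper merely writing the chain from left to right (inserting the factor $\frac{p(\bm{u},\bm{i}\mid k,l)}{p(\bm{u},\bm{i}\mid k,l)}$) while you read it from right to left. Your side remark that the independence hypothesis is not used in the lemma itself but only later, in the theorem, where $p(k,l\mid\bm{u},\bm{i})=p(k\mid\bm{u})\,p(l\mid\bm{i})$ is invoked, is also accurate.
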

\begin{proof}
\label{pf:lemma_proof}
    Since $p(k) = \sum_{\bm{u}}p(\bm{u}, k)$ and $p(l)=\sum_{\bm{i}}p(\bm{i}, l)$, we can obtain $p(k)p(l) = \sum_{\bm{u}, \bm{i}}p(\bm{u},k)p(\bm{i},l)$. Therefore, according to the Jensen's inequality, we have:
    \begin{equation}
        \begin{aligned}
            \log p(k)p(l)& = \log \sum_{\bm{u}, \bm{i}}p(\bm{u},k)p(\bm{i},l) \frac{p(\bm{u}, \bm{i} \mid k, l)}{p(\bm{u}, \bm{i} \mid k, l)}\\ &\geq \sum_{\bm{u}, \bm{i}} p(\bm{u}, \bm{i} \mid k, l) \log \frac{p(\bm{u},k)p(\bm{i},l)}{p(\bm{u}, \bm{i} \mid k, l)}.
        \end{aligned}
    \end{equation}
    Therefore, the inequality~\ref{ineq:lemm1} holds.
\end{proof}
    Then, we would prove theorem~\ref{th:th1} as follows.
\begin{theorem}
    The mutual information $MI(\bm{U};\bm{I})$ of bottom representations $\bm{U}$ and $\bm{I}$ is lower bounded by the mutual information of co-cluster $MI(K;L)$:
    \begin{equation}
        MI(K;L)\leq MI(\bm{U};\bm{I}).
    \end{equation}
\end{theorem}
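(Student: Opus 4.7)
The plan is to derive the bound by directly leveraging Lemma~\ref{lemma:lemma1}, which gives a lower bound on $\log p(k)p(l)$, and then to exploit the independence assumption $p(k,l\mid\bm{u},\bm{i}) = p(k\mid\bm{u})\,p(l\mid\bm{i})$ stated in the method section to collapse the bulk of the resulting expression into $MI(\bm{U};\bm{I})$. The whole argument is a structured rearrangement; no new inequality beyond Lemma~\ref{lemma:lemma1} should be needed.

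Concretely, I would start from the definition
\[
MI(K;L) = \sum_{k,l} p(k,l)\log p(k,l) - \sum_{k,l} p(k,l)\log p(k)p(l),
\]
and apply Lemma~\ref{lemma:lemma1} to the second term after flipping signs. This gives the upper bound
\[
-\sum_{k,l} p(k,l)\log p(k)p(l) \;\leq\; \sum_{k,l,\bm{u},\bm{i}} p(k,l)\,p(\bm{u},\bm{i}\mid k,l)\log\frac{p(\bm{u},\bm{i}\mid k,l)}{p(\bm{u},k)\,p(\bm{i},l)},
\]
where I used $p(k,l)p(\bm{u},\bm{i}\mid k,l) = p(k,l,\bm{u},\bm{i})$. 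I would then fold the remaining $\sum_{k,l} p(k,l)\log p(k,l)$ term inside the same four-fold sum (writing $p(k,l) = p(k,l,\bm{u},\bm{i})/p(\bm{u},\bm{i}\mid k,l)$ under the logarithm, or equivalently summing the marginal identity), so that
\[
MI(K;L) \;\leq\; \sum_{k,l,\bm{u},\bm{i}} p(k,l,\bm{u},\bm{i})\log\frac{p(k,l,\bm{u},\bm{i})}{p(\bm{u},k)\,p(\bm{i},l)}.
\]

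The crucial simplification comes next. Using the assumed independence $p(k,l,\bm{u},\bm{i}) = p(\bm{u},\bm{i})\,p(k\mid\bm{u})\,p(l\mid\bm{i})$ together with $p(\bm{u},k) = p(\bm{u})\,p(k\mid\bm{u})$ and $p(\bm{i},l) = p(\bm{i})\,p(l\mid\bm{i})$, the conditional factors $p(k\mid\bm{u})$ and $p(l\mid\bm{i})$ cancel from numerator and denominator, leaving
\[
\frac{p(k,l,\bm{u},\bm{i})}{p(\bm{u},k)\,p(\bm{i},l)} = \frac{p(\bm{u},\bm{i})}{p(\bm{u})\,p(\bm{i})}.
\]
Marginalizing $k$ and $l$ out of $p(k,l,\bm{u},\bm{i})$ then yields $p(\bm{u},\bm{i})$, so the right-hand side collapses exactly to $\sum_{\bm{u},\bm{i}} p(\bm{u},\bm{i})\log\frac{p(\bm{u},\bm{i})}{p(\bm{u})p(\bm{i})} = MI(\bm{U};\bm{I})$, which is the desired inequality.

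I do not anticipate any genuine technical obstacle: Lemma~\ref{lemma:lemma1} supplies the only nontrivial inequality step (an application of Jensen to turn a log of a sum into a sum of logs), and everything else is bookkeeping on joint, marginal, and conditional distributions. The main thing to be careful about is writing the four-way joint $p(k,l,\bm{u},\bm{i})$ consistently and using the independence assumption in exactly the place where the conditional factors cancel; if that cancellation is stated wrongly, the bound degrades to something strictly weaker than $MI(\bm{U};\bm{I})$.
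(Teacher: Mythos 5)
Your proposal is correct and follows essentially the same route as the paper's own proof: apply Lemma~\ref{lemma:lemma1} as the single inequality, fold everything into the four-fold sum $\sum_{\bm{u},\bm{i},k,l} p(\bm{u},\bm{i},k,l)\log\frac{p(\bm{u},\bm{i},k,l)}{p(\bm{u},k)\,p(\bm{i},l)}$, cancel the conditional factors via the independence assumption $p(k,l\mid\bm{u},\bm{i})=p(k\mid\bm{u})\,p(l\mid\bm{i})$, and marginalize over $k,l$ to recover $MI(\bm{U};\bm{I})$. The only difference is cosmetic bookkeeping in how the terms are combined before the cancellation, so nothing further is needed.
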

\begin{proof}
    According to the definition of mutual information, we calculate the mutual information of co-cluster as follows:
    \begin{equation}
        \label{eq:mi_kl_proof_1}
        \begin{aligned}
            MI(K ; L)&=\sum_{k, l} p(k, l) \log \frac{p(k,l)}{p(k) p(l)}\\ &= \sum_{k,l} p(k,l)(\log p(k,l)-log(p(k)p(l))).
        \end{aligned}
    \end{equation}
    Taking the inequality in Lemma~\ref{lemma:lemma1} into the Equation~(\ref{eq:mi_kl_proof_1}), we obtain:
    \begin{equation}
    \label{mi_kl_proof_2}
    \resizebox{.98\hsize}{!}{
        $\begin{aligned}
            MI(K;L)&\leq \sum_{k,l} p(k,l)(\log p(k,l)- \sum_{\bm{u}, \bm{i}} p(\bm{u}, \bm{i} \mid k, l) \log \frac{p(\bm{u},k)p(\bm{i},l)}{p(\bm{u}, \bm{i} \mid k, l)})\\
            &=\sum_{k,l} p(k,l)\sum_{\bm{u}, \bm{i}} p(\bm{u}, \bm{i} \mid k, l)(\log p(k,l)- \log \frac{p(\bm{u},k)p(\bm{i},l)}{p(\bm{u}, \bm{i} \mid k, l)})\\
            &=\sum_{k,l} p(k,l)\sum_{\bm{u}, \bm{i}} p(\bm{u}, \bm{i} \mid k, l)  \log \frac{p(k,l)p(\bm{u}, \bm{i} \mid k, l)}{p(\bm{u},k)p(\bm{i},l)}\\
            &=\sum_{\bm{u}, \bm{i}, k,l} p(\bm{u}, \bm{i}, k, l) \log\frac{p(\bm{u}, \bm{i}, k, l)}{p(\bm{u},k)p(\bm{i},l)}\\
            &= \sum_{\bm{u}, \bm{i}, k,l} p(\bm{u}, \bm{i}, k, l) \log\frac{p(k, l \mid \bm{u}, \bm{i}) p(\bm{u}, \bm{i})}{p(\bm{u})p(k\mid \bm{u})p(\bm{i})p(l\mid \bm{i})} = R.
        \end{aligned}$
    }
    \end{equation}
    Since the bottom representations of users ($\bm{U}$) and items ($\bm{I}$) are independent, we have $p(k, l \mid \bm{u}, \bm{i}) = p(k\mid \bm{u})p(l\mid \bm{i})$. By taking it back into $R$, we can obtain:
    \begin{equation}
    \label{eq:mi_ui}
        \begin{aligned}
            R&=\sum_{\bm{u}, \bm{i}, k,l} p(\bm{u}, \bm{i}, k, l) \log\frac{p(\bm{u}, \bm{i})}{p(\bm{u})p(\bm{i})}\\&=\sum_{\bm{u}, \bm{i}} p(\bm{u}, \bm{i}) \log\frac{p(\bm{u}, \bm{i})}{p(\bm{u})p(\bm{i})}=MI(\bm{U};\bm{I}).
        \end{aligned}
    \end{equation}
    Therefore, we prove that $MI(K;L)\leq MI(\bm{U};\bm{I})$.
\end{proof}

\section{Additional Experimental Results}
In this section, we present additional results as follows:
\begin{figure*}[htbp]
\centering
\subfigure[Douban - NDCG@K]{\label{fig:module_ablation_douban_ndcg}\includegraphics[width=0.65\columnwidth]{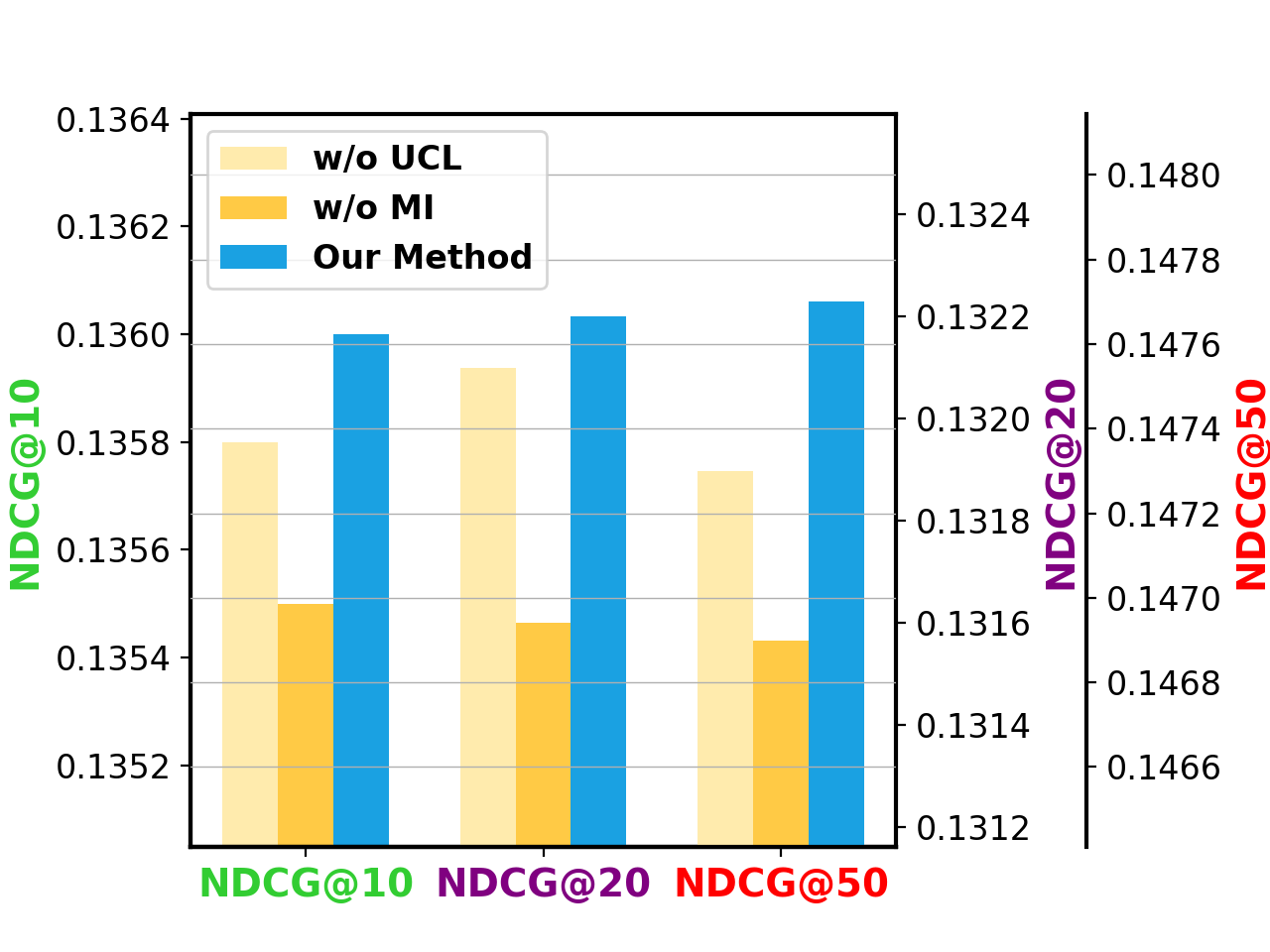}}
\subfigure[Ciao - NDCG@K]{\label{fig:module_ablation_ciao_recall}\includegraphics[width=0.65\columnwidth]{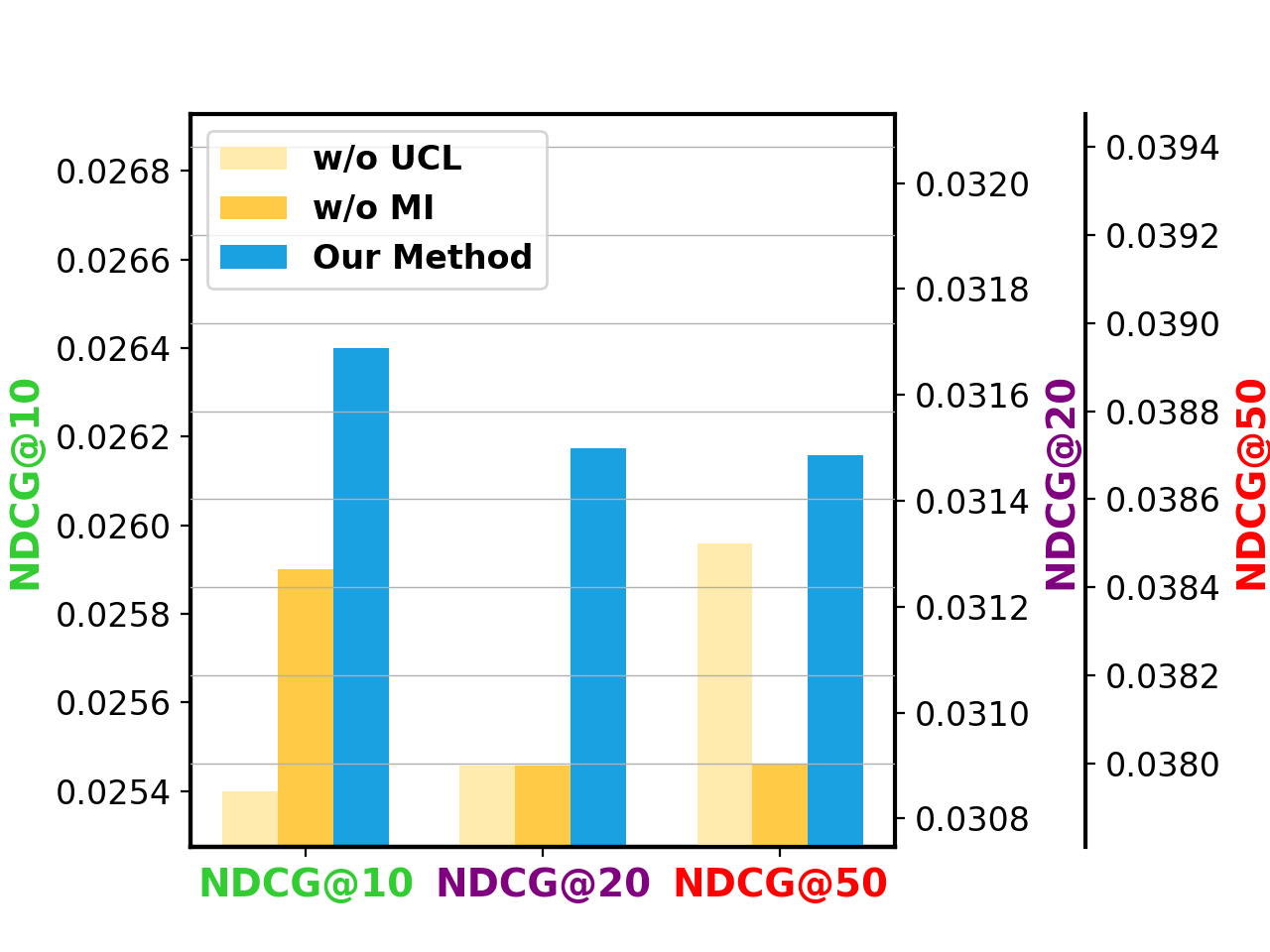}}
\subfigure[MovieLens-1M - NDCG@K]{\label{fig:module_ablation-1m_ndcg}\includegraphics[width=0.65\columnwidth]{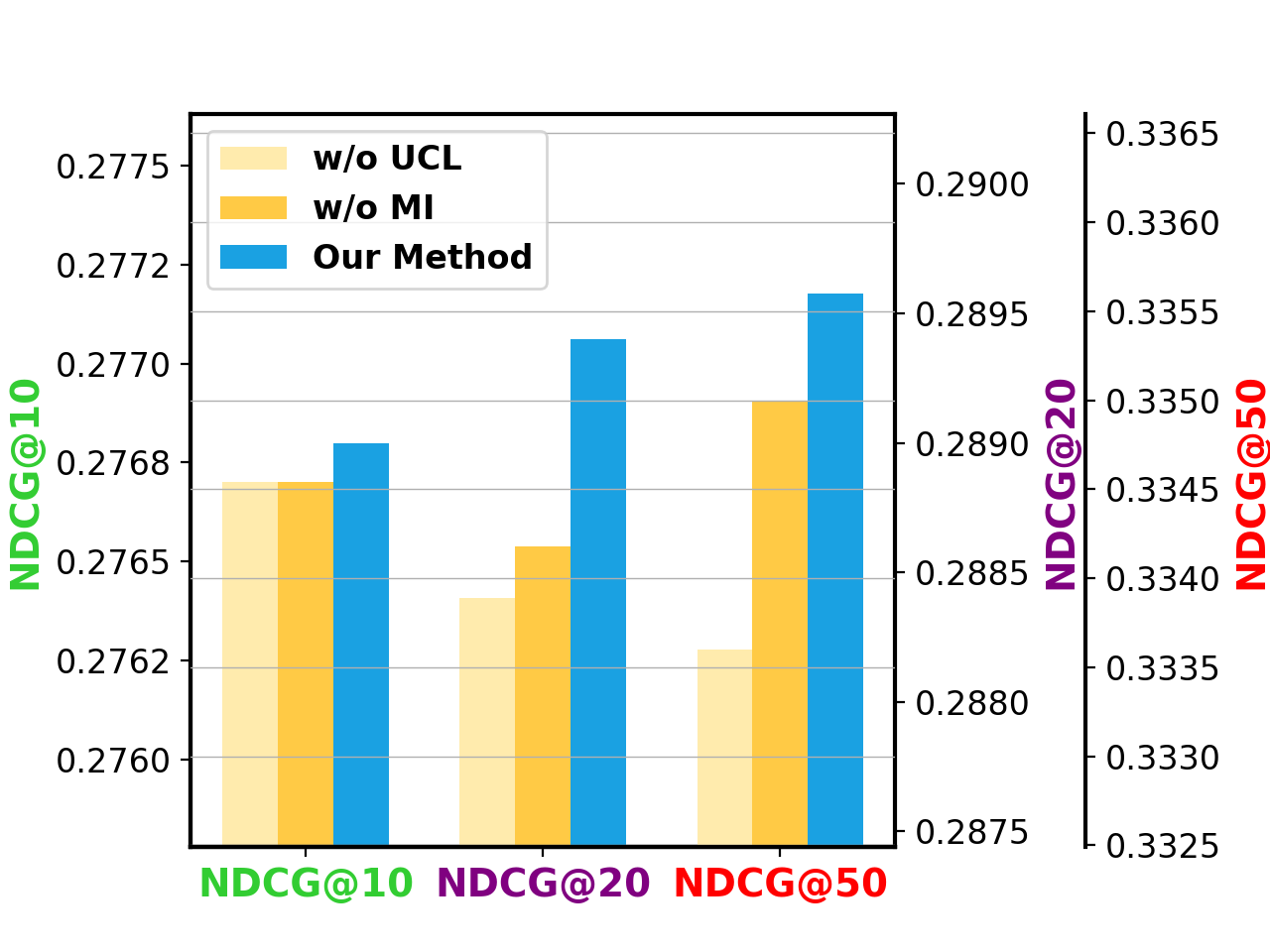}}
\vskip -0.15in
\caption{Ablation study on the effectiveness of different modules. \textit{w/o UCL} denotes the variant without the uniformly contrastive learning module and \textit{w/o MI} denotes the one without co-clustering module. This figure shows the results on datasets.}
\label{fig:performance-ablation-appendix}
\vskip -0.05in
\end{figure*}

\begin{figure*}[htbp]
\centering
\subfigure[Douban - Recall@K]{\label{fig:ssl_temp_douban_recall}\includegraphics[width=0.65\columnwidth]{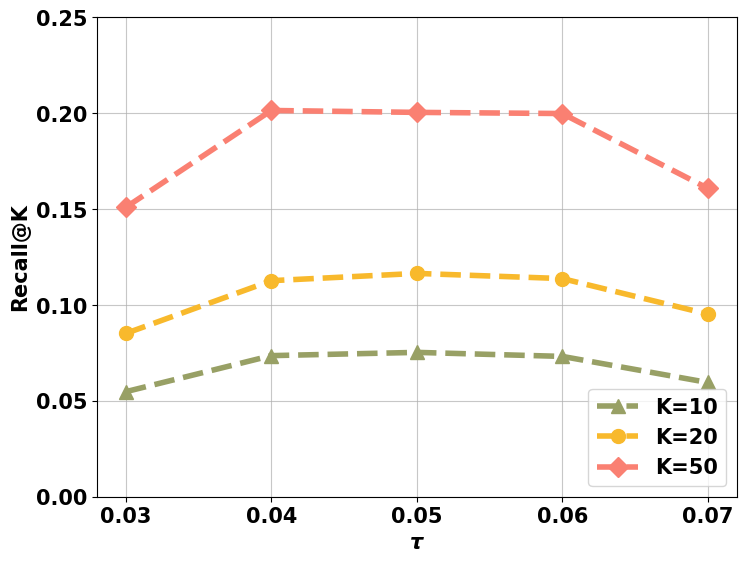}}
\subfigure[Ciao - Recall@K]{\label{fig:ssl_temp_ciao_recall}\includegraphics[width=0.65\columnwidth]{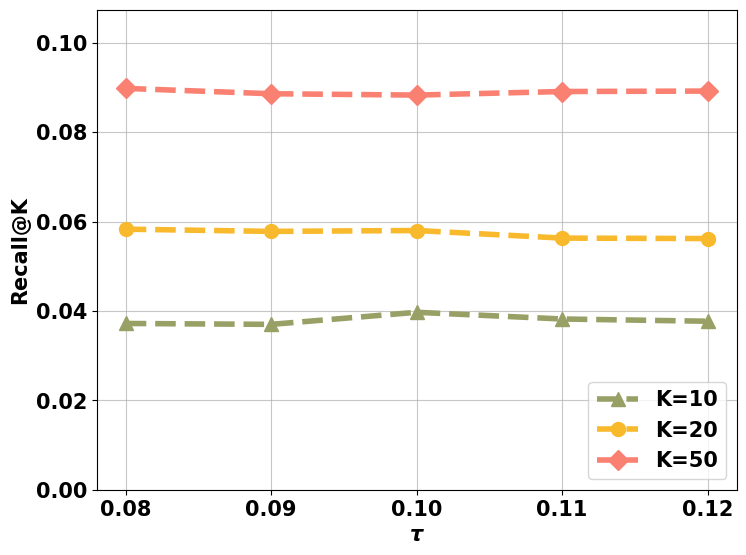}}
\subfigure[Amazon-Books - Recall@K]{\label{fig:ssl_temp_amazon-books_recall}\includegraphics[width=0.65\columnwidth]{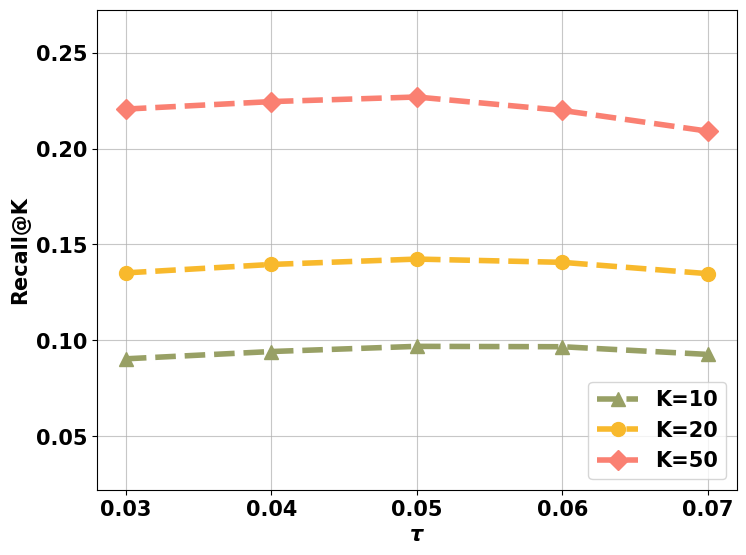}}

\vskip -0.1in
\caption{Performance comparison w.r.t. different values of temperature hyper-parameter $\tau$ on three datasets under metric Recall@K.}
\label{fig:hyperparameter-temperature-tau-appendix}
\vskip -0.1in
\end{figure*}

\begin{figure*}[htbp]
\centering

\subfigure[Ciao - $\lambda_1$]{\label{fig:lambda1-target-ssl_ciao_recall}\includegraphics[width=0.65\columnwidth]{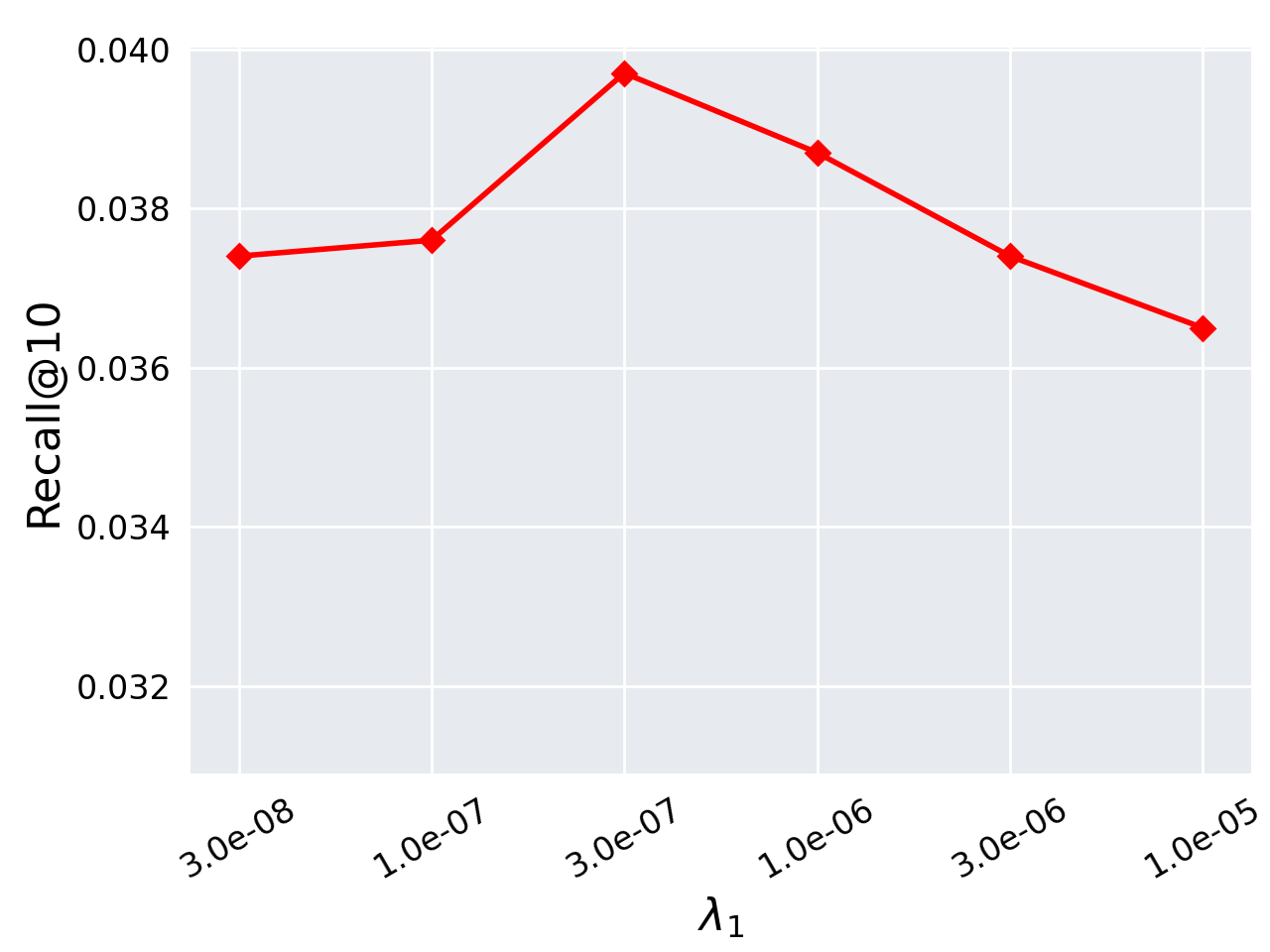}}
\subfigure[Ciao - $\lambda_2$]{\label{fig:lambda2-cluster-mi_ciao_recall}\includegraphics[width=0.65\columnwidth]{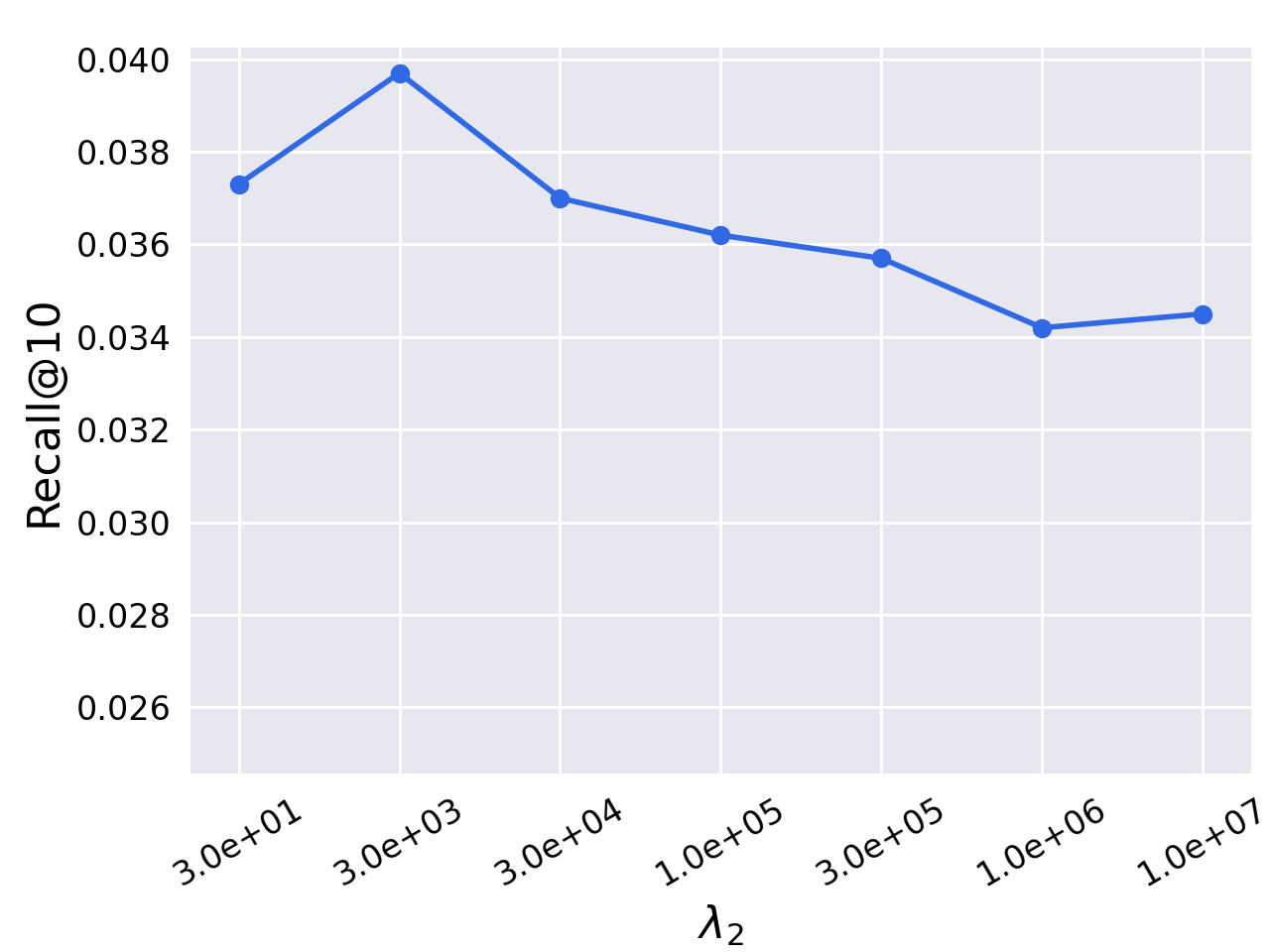}}
\subfigure[Ciao - $\lambda_3$]{\label{fig:lambda3-ssl-reg_ciao_recall}\includegraphics[width=0.65\columnwidth]{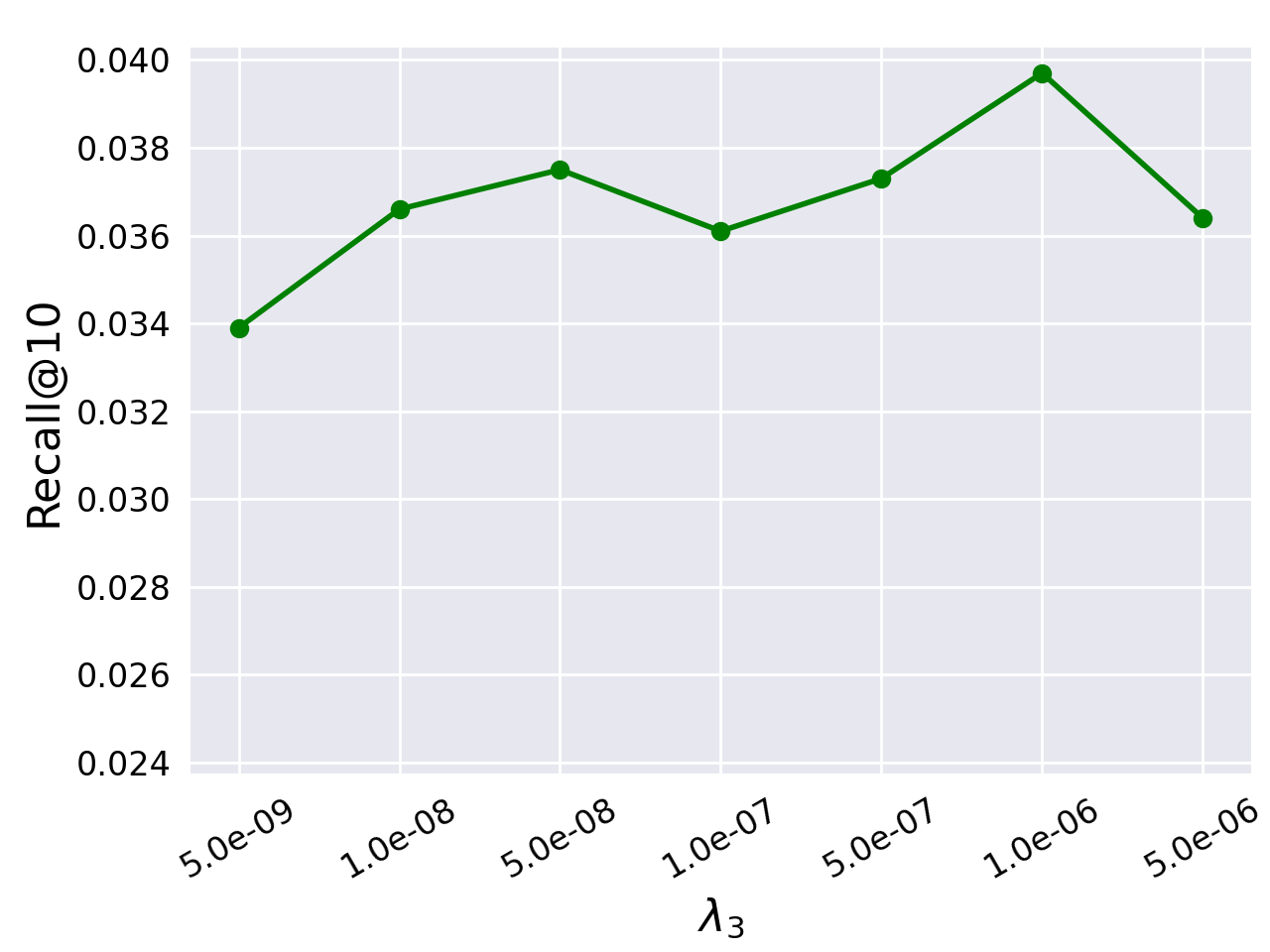}}

\subfigure[MovieLens-1M - $\lambda_1$]{\label{fig:lambda1-target-ssl-1m_recall}\includegraphics[width=0.65\columnwidth]{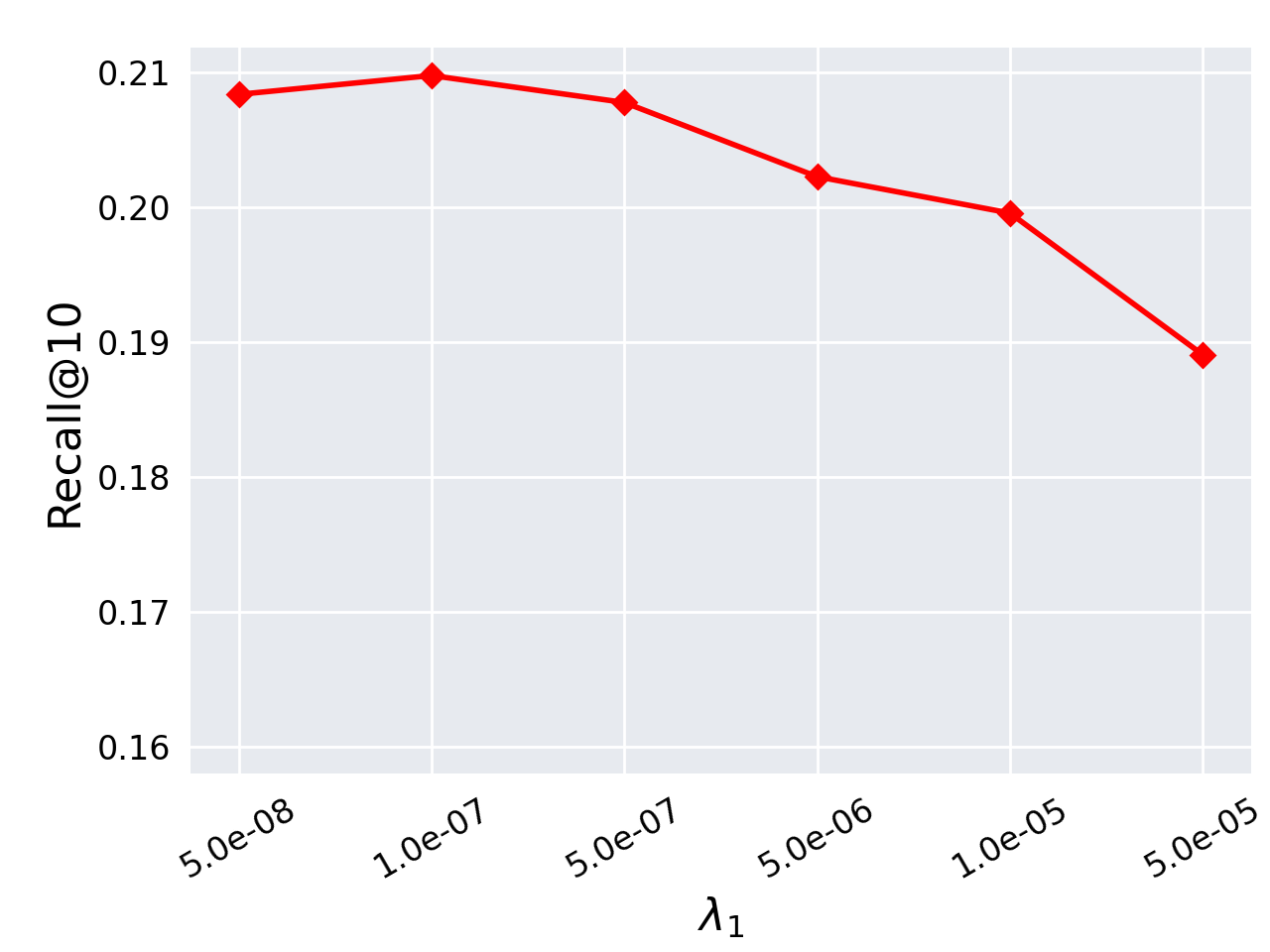}}
\subfigure[MovieLens-1M - $\lambda_2$]{\label{fig:lambda2-cluster-mi-1m_recall}\includegraphics[width=0.65\columnwidth]{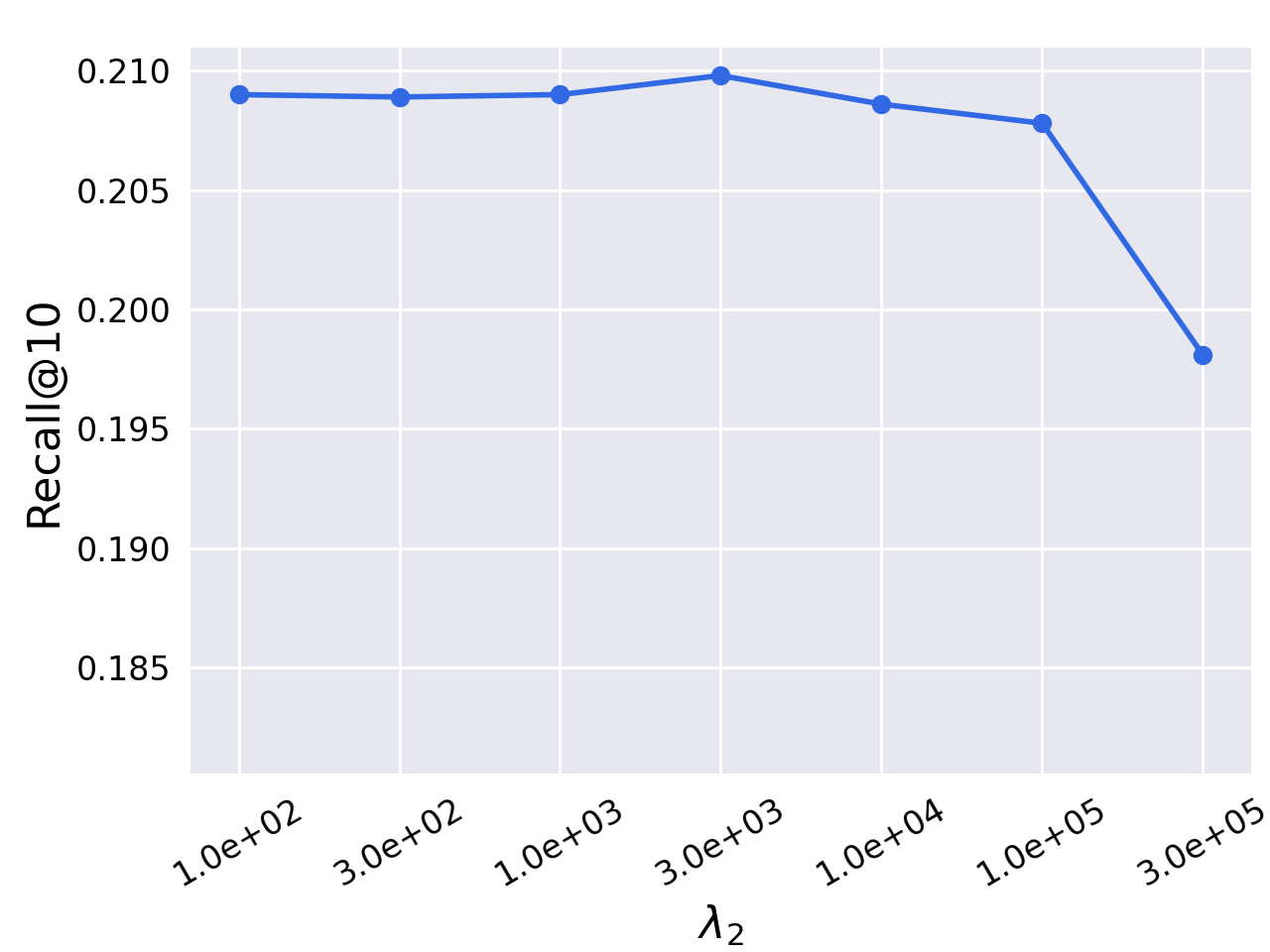}}
\subfigure[MovieLens-1M - $\lambda_3$]{\label{fig:lambda3-ssl-reg-1m_recall}\includegraphics[width=0.65\columnwidth]{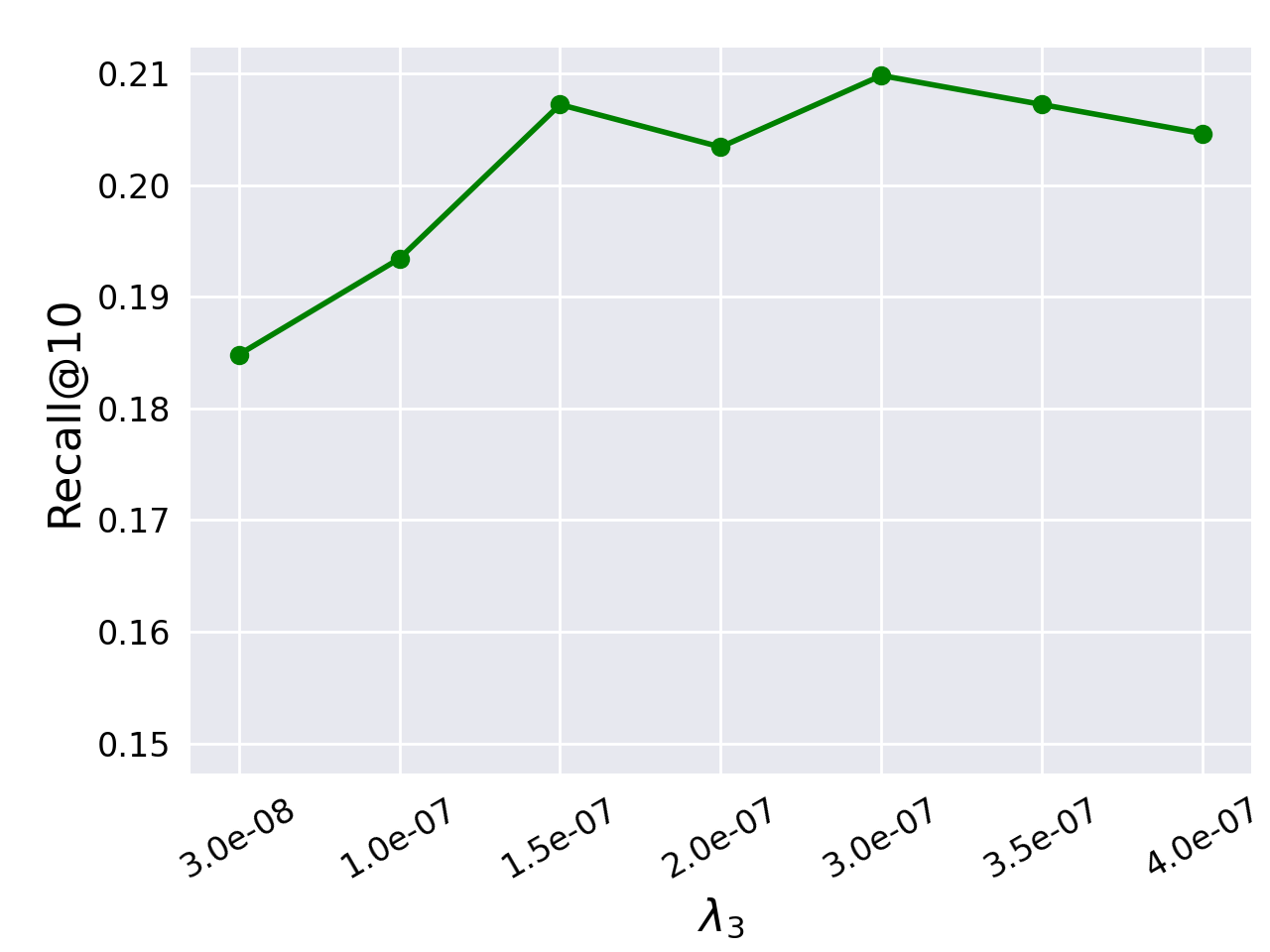}}

\subfigure[Amazon-Books - $\lambda_1$]{\label{fig:lambda1-target-ssl_amazon-books_recall}\includegraphics[width=0.65\columnwidth]{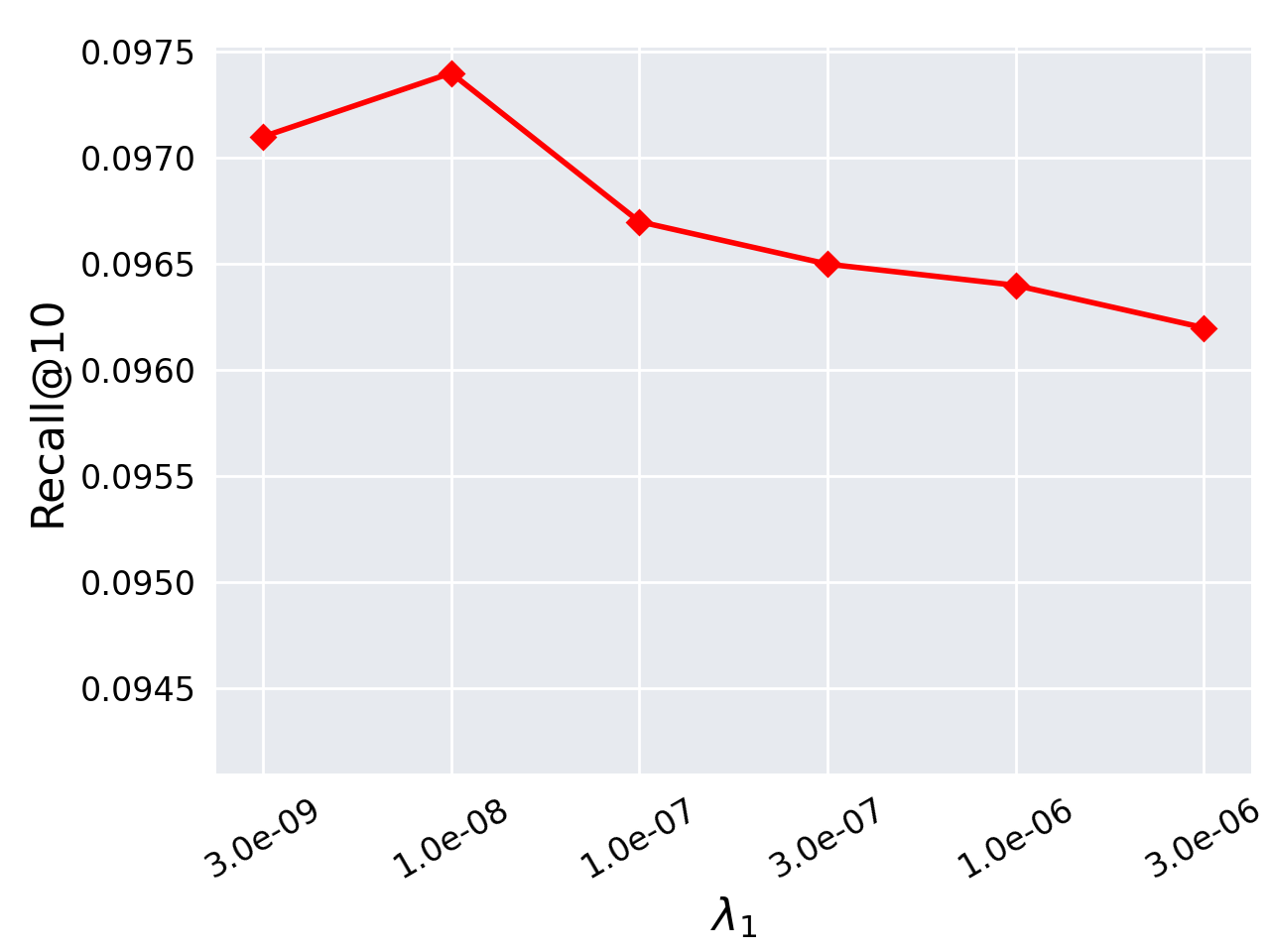}}
\subfigure[Amazon-Books - $\lambda_2$]{\label{fig:lambda2-cluster-mi_amazon-books_recall}\includegraphics[width=0.65\columnwidth]{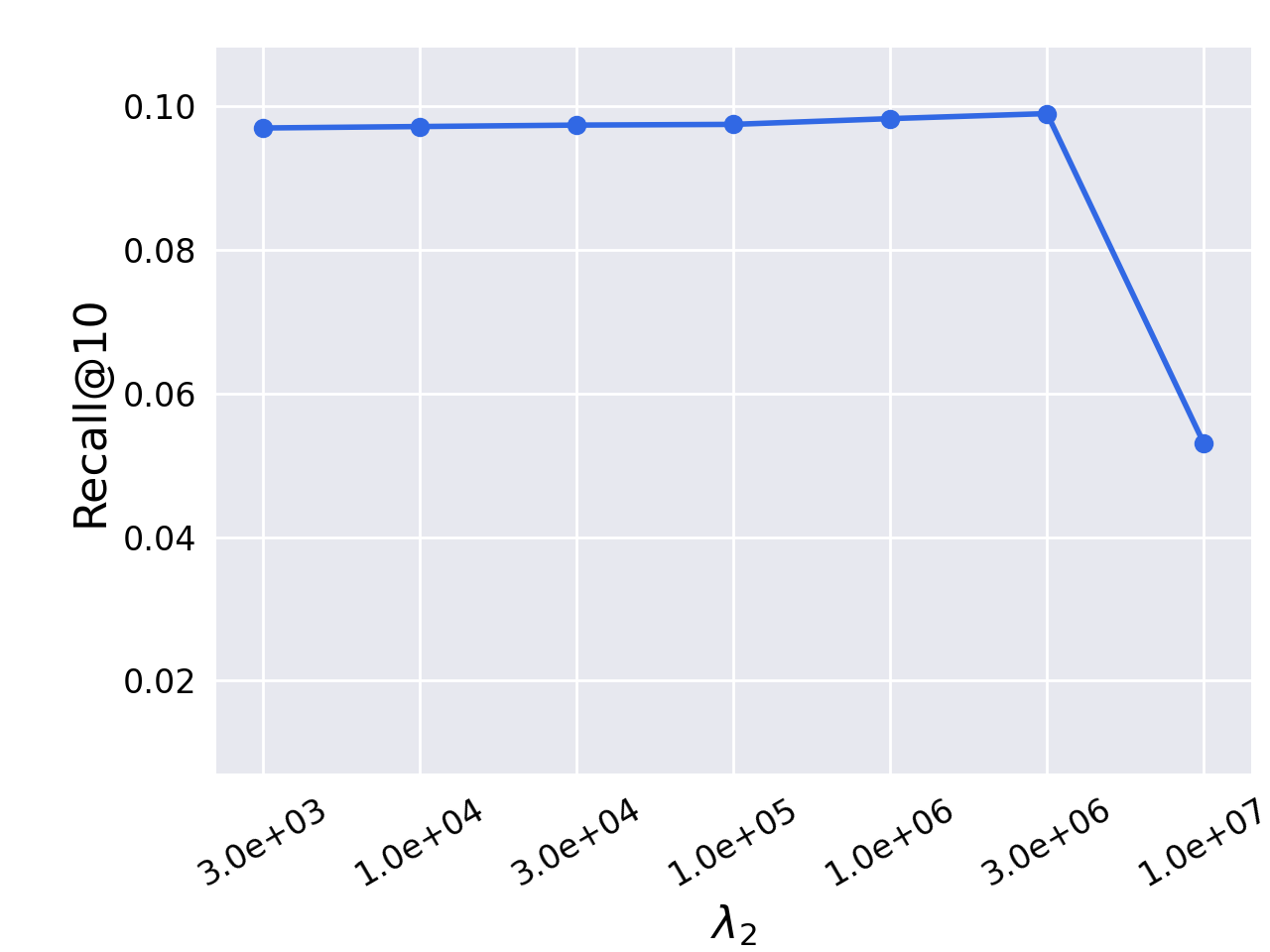}}
\subfigure[Amazon-Books - $\lambda_3$]{\label{fig:lambda3-ssl-reg_amazon-books_recall}\includegraphics[width=0.65\columnwidth]{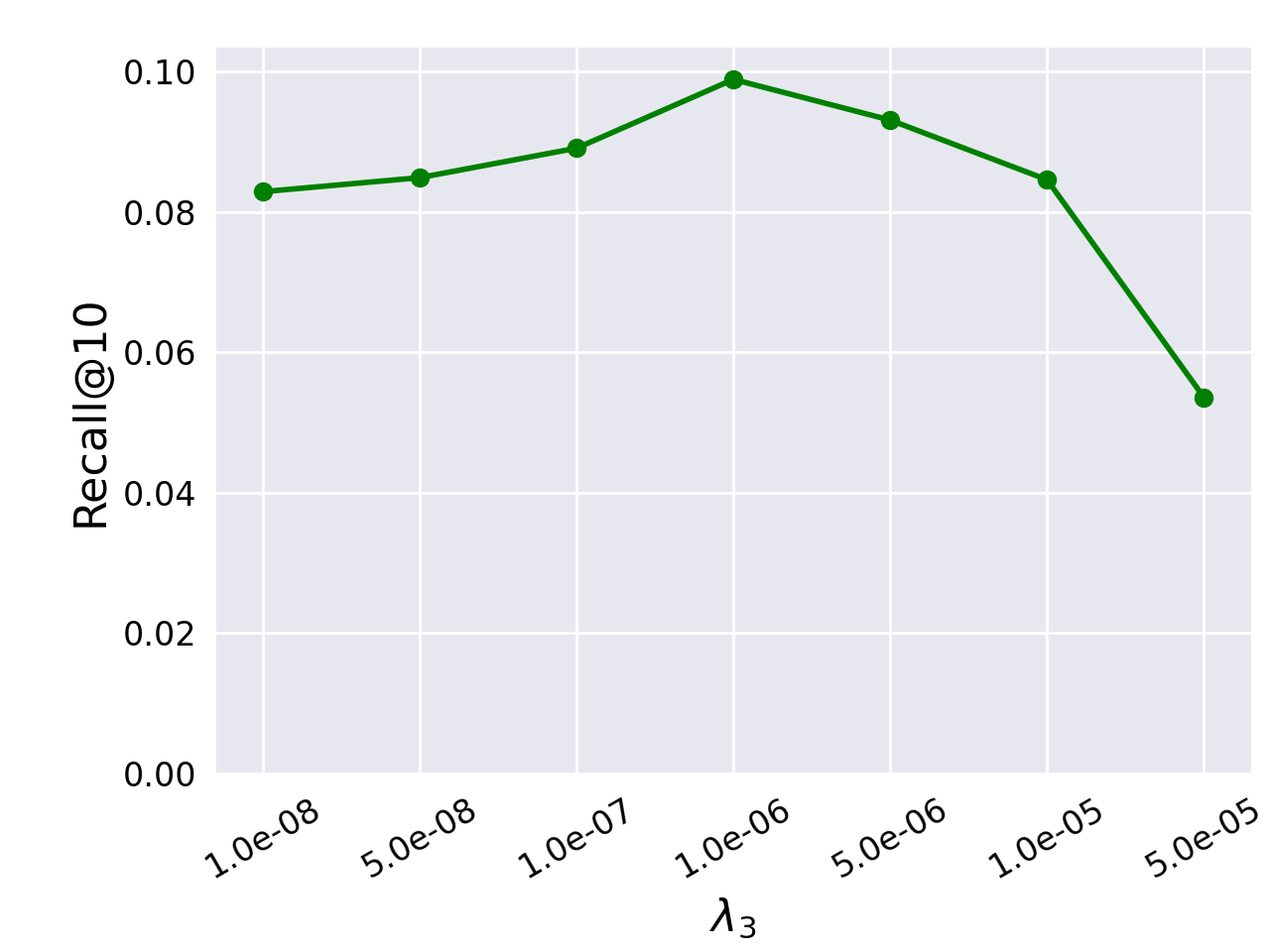}}
\vskip -0.15in
\caption{The effect of hyper-parameter $\lambda_1$, $\lambda_2$ and $\lambda_3$ under metric Recall@10 on three datasets.}
\label{fig:hyperparameter-lambdas-appendix}
\vskip -0.15in
\end{figure*}

\end{document}